\newtheorem{theorem}{Theorem}
\newtheorem{corollary}[theorem]{Corollary}
\newtheorem{lemma}[theorem]{Lemma}
\newtheorem{proposition}[theorem]{Proposition}
\theoremstyle{definition}
\newtheorem{definition}[theorem]{Definition}
\renewcommand{\S}{\mathcal{S}}
\newcommand{\D}{\mathcal{D}}
\newcommand{\Hyp}{\mathcal{H}}
\newcommand{\E}{\mathbb{E}}
\DeclarePairedDelimiter{\abs}{\lvert}{\rvert}
\DeclareMathOperator*{\argmin}{arg\,min}
\newcommand{\bigO}{\mathcal{O}} 
\newcommand{\A}{\ensuremath{\mathcal{A}}\xspace}
\newcommand{\B}{\ensuremath{\mathcal{B}}\xspace}
\newcommand{\alg}{{\normalfont\textsc{Alg}}}
\newcommand{\opt}{{\normalfont\textsc{Opt}}}
\newcommand{\cS}{\ensuremath{\mathcal{S}}}
\newcommand{\pred}{\hat{\sigma}}
\newcommand{\perfectpred}{\sigma}
\newcommand{\dense}{\mu}
\newcommand{\dualSa}{\hat{a}}
\newcommand{\dualSb}{\hat{b}}
\newcommand{\dualVa}{a}
\newcommand{\dualVb}{b}
\newcommand{\errR}{\eta^R}
\newcommand{\errS}{\eta^S}
\newcommand{\speed}{s}
\newcommand{\abc}[3]{$\textup{#1}|#2|#3$}
\definecolor{bluegreen}{HTML}{33615E}
\definecolor{lime}{HTML}{56A778}
\definecolor{yellow}{HTML}{E5CB6D}
\definecolor{orange}{HTML}{FB8318}
\definecolor{lightred}{HTML}{DD6143}
\definecolor{proved}{HTML}{85A24A}
\definecolor{expected}{HTML}{F4AB36}
\definecolor{close}{HTML}{117F7D}
\definecolor{unknown}{HTML}{C1181E}
\title{Permutation Predictions for Non-Clairvoyant Scheduling}
\author{
Alexander Lindermayr\thanks{Faculty of Mathematics and Computer Science, University of Bremen, Germany. \emph{\{linderal,nmegow\}@uni-bremen.de}} 
\and
Nicole Megow\footnotemark[1] 
}
\date{}
\begin{document}

\maketitle

\begin{abstract}
  In non-clairvoyant scheduling, the task is to find an online strategy for scheduling jobs with a priori {\em unknown} processing requirements with the objective to minimize the total (weighted) completion time. We revisit this well-studied problem in a recently popular learning-augmented setting that integrates (untrusted) predictions in online algorithm design. 
  While previous works used predictions on processing requirements, 
  we propose a new prediction model, which provides a relative order of jobs which could be seen as predicting algorithmic actions rather than parts of the unknown input. We show that these predictions have desired properties, admit a natural error measure as well as algorithms with strong performance guarantees and that they are learnable in both, theory and practice. We generalize the algorithmic framework proposed in the seminal paper by Kumar et al.~(NeurIPS'18) and present the first learning-augmented scheduling results for weighted jobs and unrelated machines. We demonstrate
  in empirical experiments the practicability and superior performance 
  compared to the previously suggested single-machine algorithms. 
\end{abstract}

\thispagestyle{empty}

\section{Introduction}

Non-clairvoyant scheduling requires to schedule jobs without knowing their processing requirements a priori. This is a fundamental problem and has been studied extensively in many variations~\cite{DBLP:journals/tcs/MotwaniPT94,DBLP:journals/ipl/KimC03a,DBLP:conf/ipps/BeaumontBEM12, DBLP:conf/focs/ImKMP14,DBLP:journals/jacm/ImKM18}.

We consider non-clairvoyant scheduling with the objective of minimizing the sum of weighted completion times in different settings. Generally, we are given a set of jobs, each job with individual weight and unknown processing time, possibly arriving online at its release date. %
All jobs must be scheduled on a single or identical parallel machines; preemption is allowed.  Using classical scheduling notation, we refer to the problems we consider as the non-clairvoyant versions of \abc{1}{pmtn}{\sum w_jC_j} and \abc{P}{r_j,pmtn}{\sum w_jC_j}. We also investigate %
	non-clairvoyant scheduling on unrelated machines, denoted by \abc{R}{r_j, pmtn}{\sum w_jC_j}, where jobs may have very different processing times on each of the machines, but a machine-dependent processing rate is given. (Precise definitions follow later.) 

The performance of online algorithms is typically assessed by \emph{competitive analysis}. An online algorithm is~$\rho$-{\em competitive} if, for all instances~$I$, the algorithm has cost~$\alg(I) \leq \rho \cdot \opt(I)$, where~$\opt(I)$ is the objective value of an optimal solution for~$I$. 

Non-clairvoyant algorithms assign processing \emph{rates} to jobs and assume time sharing, that is, parallel processing of jobs with rates that sum up to at most one per machine and per job. One could see this as processing each job by a certain amount in every infinitesimal time interval. The most prominent strategy is the Round-Robin~(RR) algorithm, which assigns equal rates to all alive jobs and is~$2$-competitive for \abc{1}{pmtn}{\sum C_j}, which is best possible~\cite{DBLP:journals/tcs/MotwaniPT94}. The same guarantee is possible using a natural generalization of RR to weighted jobs~\cite{DBLP:journals/ipl/KimC03a} and/or to identical machines~\cite{DBLP:conf/ipps/BeaumontBEM12,DBLP:journals/tcs/MotwaniPT94}.
Scheduling on unrelated machines is much harder and requires careful migration between machines~\cite{DBLP:conf/soda/GuptaIKMP12}. Nevertheless, it is possible to compute rates proportional to job properties and machine constraints and obtain an~$\bigO(1)$-competitive algorithm~(\cite{DBLP:journals/jacm/ImKM18}; also implicitly in~\cite{DBLP:conf/focs/ImKMP14}). %

The assumption of non-clairvoyance seems  too strong for many applications. While the exact processing time might be unknown, often some estimate is available, e.g., 
extracted information from past data is commonly used to predict the future.
The recently emerging line of research on \emph{learning-augmented} algorithms proposes 
to design algorithms that have access to additional (possibly erroneous) input, called \emph{prediction}, to achieve an improved performance if the prediction is accurate while performing not much worse than algorithms without access to predictions, if the predictions are completely wrong. Ideally, the performance of a learning-augmented algorithm is a function of the quality of the prediction for some well defined error measure. Here, defining an appropriate error measure is a key task. 
Given a definition for the prediction error~$\eta \geq 0$ that quantifies the quality of the prediction, %
the goal is express the competitive ratio of the algorithm by a monotone function~$f(\eta)$. %
A learning-augmented algorithm is called~$f(0)$-consistent (in case of perfect prediction) and~$\beta$-robust if~$f(\eta) \leq \beta$ for all possible errors~$\eta$. 

Recent work on non-clairvoyant scheduling with predictions~\cite{DBLP:conf/nips/PurohitSK18,DBLP:conf/nips/WeiZ20,DBLP:conf/spaa/Im0QP21} studies the single-machine problem \abc{1}{pmtn}{\sum C_j} with predicted processing requirements~${\{y_j\}}_{j \in J}$, which we call \emph{length predictions}. 
Commonly, we distinguish two categories of prediction models: either predict parts of the online input (\emph{input-predictions})~\cite{DBLP:conf/icml/LykourisV18,DBLP:conf/nips/PurohitSK18,DBLP:conf/nips/BamasMRS20,AzarPT22,AzarLT22} or algorithmic actions (\emph{action-predictions})~\cite{DBLP:conf/icml/AntoniadisCE0S20,DBLP:conf/soda/LattanziLMV20,DBLP:conf/nips/BamasMS20}. Length predictions clearly fall in to the first category.

In their seminal paper~\cite{DBLP:conf/nips/PurohitSK18}, Kumar et al.~propose an algorithm that is controlled by a parameter~$\lambda \in (0,1)$, which can be seen as an indicator of the algorithm's trust in the accuracy of the prediction. Measuring the quality of a prediction ${\{y_j\}}_{j \in J}$ w.r.t.\ the actual processing requirements ${\{p_j\}}_{j \in J}$ by the~$\ell_1$ metric~($\ell_1 = \sum_{j \in J} \abs{p_j - y_j}$), they prove a competitive ratio of at most~$(1/(1 - \lambda))(1 + n \ell_1 / \opt)$ while also maintaining a robustness factor of~$2 / \lambda$. 
However, the~$\ell_1$-metric does not seem to distinguish well between ``good'' and ``bad'' predictions, as has been noted recently by Im et al.~\cite{DBLP:conf/spaa/Im0QP21}. 
They argue that, intuitively, the linear error measure $\ell_1$ is incompatible with the sum of weighted completion time objective 
and using $n \cdot \ell_1$ as upper bound %
may overestimate the ``actual'' error, substantially.

Im et al.~\cite{DBLP:conf/spaa/Im0QP21} propose a different error measure $\nu$ that satisfies certain  desired properties and is based on the optimal solution of artificial instances mixing $y_j$ and $p_j$. It satisfies~$\ell_1 \leq \nu \leq n \ell_1$. Using this error, they design a learning-augmented randomized algorithm with competitive ratio $\min\{1+\lambda+{\bigO(1/\lambda^3 \log (1/\lambda))\cdot \nu}/{\opt}, {2}/{\lambda}\}$, for sufficiently small $\lambda>0$, in expectation. %
Their algorithm is quite sophisticated, requires large constants to diverge from RR (we give more details later), and it seems very challenging to generalize it to scheduling settings with release dates, weights or even heterogeneous machines. Further, the error measure $\nu$ is still sensitive to changes in the predicted job lengths which would not affect an optimal schedule at all which seems an undesired property.

\paragraph{Our contribution} In this work, we contribute to non-clairvoyant scheduling with predictions in two ways: (i) we propose a new prediction model with a new error definition, as an alternative to length predictions studied so far, 
and (ii) we revisit the classical idea of time sharing and develop a general framework for designing learning-augmented scheduling algorithms for more general settings, beyond the simple single-machine setting.

We propose a novel prediction model for scheduling problems, which we call \emph{permutation prediction model}. Intuitively, it provides    
a permutation of jobs suggesting %
a priority order for scheduling. In a way, this is an action-prediction in contrast to previously studied input-predictions. 
The idea is that, instead of predicting job lengths, we take structural properties of an input instance into account that an optimal algorithm may exploit.  
Notice that for minimizing the sum of weighted completion time, the {\em Weighted Shortest Remaining Processing Time (WSPT)} order, i.e., jobs in order of weight over processing time ratios, %
 has proven to be useful in various settings. Indeed, for the non-clairvoyant version of \abc{1}{pmtn}{\sum w_jC_j}, knowing the WSPT order of jobs would be sufficient to determine an optimal schedule~\cite{smith1956various}. While this knowledge is not sufficient for optimally scheduling with release dates and/or on multiple machines, it still admits strategies with good approximations on an optimal solution~\cite{DBLP:journals/orl/MegowS04,DBLP:conf/soda/AnandGK12,DBLP:journals/mor/GuptaMUX20}. For unrelated machines, we also include a job-to-machine assignment in the prediction model.  

Clearly, a WSPT-based permutation prediction could be derived from a length prediction. The advantage of our model is that it is much more compact, %
captures a crucial structural property of an optimal solution and makes error measures less vulnerable to small noise in the prediction compared to the length prediction model. 

As a key contribution, we define a new, meaningful error measure that quantifies the impact of an error in the prediction to an algorithm's cost explicitly in terms of the objective function. %
It has several desirable properties such as 
\begin{inparaenum}[$(i)$]
\item monotonicity and
\item Lipschitzness (both highly advertised recently by Im el al.~\cite{DBLP:conf/spaa/Im0QP21}),
\item theoretical learnability of our prediction model with respect to the error definition, which we show by proving that our predictions are efficiently PAC-learnable in the agnostic sense, as well as
\item practical learnability, which we demonstrate in empirical experiments, showing that our implemented learning algorithm quickly improves the performance of our scheduling algorithms and appears superior to previously presented algorithms.
\end{inparaenum}

Further, we revisit the algorithmic technique of time sharing introduced by Kumar~et~al.~\cite{DBLP:conf/nips/PurohitSK18} 
in their seminal work on  non-clairvoyant scheduling with predictions. We extend this technique  to a general framework for designing learning-augmented scheduling algorithms allowing for release dates, job weights and unrelated machines. 
As a main contribution, we give the first algorithm for non-clairvoyant scheduling with predictions on unrelated machines and prove strong performance bounds, smoothly degrading with prediction quality. 
More precisely, we show for the permutation prediction model and two appropriate error definitions $\errS$ and $\errR$ that there exists for every~$\lambda \in (0,1)$ a learning-augmented non-clairvoyant online algorithm for minimizing the total weighted completion time on
  \begin{enumerate}[(i)]
    \item a single machine, \abc{1}{pmtn}{\sum w_jC_j}, with a competitive ratio of at most 
    \[ 
      \min\left\{ \frac{1}{1-\lambda} \left(1 + \frac{\errS}{\opt} \right), \frac{2}{\lambda} \right\},
    \]
    \item $m$~identical machines with release dates, \abc{P}{r_j,pmtn}{\sum w_jC_j}, with a competitive ratio of at most 
    \[\min\left\{ \frac{1}{1-\lambda} \left(2 + \frac{\errS}{m \cdot \opt} \right), \frac{3}{\lambda} \right\}, \text{ and}
    \]
    \item unrelated machines with release dates, \abc{R}{r_j,pmtn}{\sum w_jC_j}, with a competitive ratio of at most 
    \[ 
      \min\left\{ \frac{1}{1-\lambda} \left(5.8284 + \frac{\errR}{\opt} \right), \frac{128}{\lambda} \right\}.
    \]
  \end{enumerate}

Our framework requires a clairvoyant and a non-clairvoyant algorithm for a given scheduling problem, both of them must satisfy a certain monotonicity property. Then, we design a learning-augmented variation of the clairvoyant algorithm that admits a competitive ratio as a function of the error. %
Intuitively, the errors~$\errR$ and $\errS$ measure how much an erroneous prediction influences the objective value compared to an accurate %
prediction. For a single and identical machines, we require even less predicted information (no machine assignment) and the simpler measure~$\errS$ suffices.

While we use non-clairvoyant algorithms as a black box from the literature, the new contribution lies in proving error-dependent competitive ratios for monotone clairvoyant algorithms that use predictions as input. This may require designing new algorithms. In particular, we show a competitive ratio of $3 + 2 \sqrt{2} \approx 5.8284$ for a natural Greedy algorithm for the clairvoyant %
problem \abc{R}{r_j,pmtn}{\sum w_jC_j}. This does not match the recent and best known deterministic bound of $3$~\cite{DBLP:conf/icalp/BienkowskiKL21}, but our algorithm satisfies the desired properties of being error-sensitive and monotone.

\paragraph{Further related work}
There has been significant interest in the recent framework of learning-augmented online algorithms. Many problems have been considered, e.g., caching~\cite{DBLP:conf/soda/Rohatgi20,DBLP:conf/approx/Wei20,DBLP:conf/icml/AntoniadisCE0S20}, further scheduling~\cite{DBLP:conf/soda/LattanziLMV20,DBLP:conf/stoc/AzarLT21,AzarLT22,DBLP:conf/nips/WeiZ20,DBLP:conf/nips/BamasMRS20,DBLP:conf/innovations/Mitzenmacher20,DBLP:conf/acda/Mitzenmacher21,DBLP:conf/innovations/ScullyGM22},
rent-or-buy problems~\cite{DBLP:conf/nips/PurohitSK18,DBLP:conf/icml/GollapudiP19, DBLP:conf/icml/AnandGP20,DBLP:conf/nips/Banerjee20,DBLP:conf/nips/WangLW20,AntoniadisCEPS21,DBLP:conf/nips/WeiZ20}, 
paging~\cite{DBLP:conf/icalp/JiangP020,DBLP:conf/innovations/EmekKS21,DBLP:conf/soda/BansalCKPV22},
graph problems~\cite{DBLP:journals/corr/EberleLMNS22,AzarPT22,DBLP:journals/corr/XuM22,DBLP:conf/innovations/LindermayrMS22},
secretary problems~\cite{DBLP:conf/nips/AntoniadisGKK20,DBLP:conf/sigecom/DuttingLLV21},
matching~\cite{DBLP:conf/icml/AntoniadisCE0S20,DBLP:conf/esa/LavastidaM0X21,DBLP:conf/acda/LavastidaM0X21} and many more.

Non-clairvoyant and clairvoyant online scheduling models have been studied extensively; see the surveys~\cite{pruhsST04,sgall98}. Most relevant for our work are WSPT-based algorithms such as~\cite{DBLP:journals/orl/MegowS04,DBLP:conf/soda/AnandGK12,DBLP:journals/mor/GuptaMUX20}.

\paragraph{Paper organization}
In~\Cref{sec:prediction-model} we give precise definitions for the problem, prediction model and error measure. Then, we introduce our algorithmic framework and apply it to concrete scheduling problems in~\Cref{sec:framework}.
We prove efficient PAC learnability of our predictions in~\Cref{sec:learnability} and discuss empirical results in~\Cref{sec:experiments}.

\section{Problem and prediction model}%
\label{sec:prediction-model}

\subsection{Problem definition}
We consider the problem of scheduling~$n$ jobs~$J = \{1,\ldots,n\}=:[n]$ preemptively on~$m$ unrelated machines. Every job~$j \in J$ has an associated weight~$w_j$ and processing requirement~$p_{j}$. Further, for every machine~$i \in [m]$ there is given a rate~$\ell_{ij}$ which is the amount of processing that job~$j$ receives if it is processed one time unit on machine~$i$, resulting in a total processing time $p_{ij} = \ell_{ij} \cdot p_{j}$ if job $j$ is scheduled on machine~$i$. Jobs arrive online at their individual release dates~${\{r_j\}}_{j \in J}$. A non-clairvoyant online algorithm has to schedule jobs $J$ on the given machines, but is oblivious to unreleased jobs and has no information on  processing requirements.
The objective is to minimize the weighted sum of completion times~$\sum_{j \in J} w_j C_j$, where the completion time~$C_j$ of a job~$j$ is the first point in time when it has been processed for~$p_j$ units. In the standard three-field notation, this problem is denoted as non-clairvoyant version of \abc{R}{r_j,pmtn}{\sum w_jC_j}. Note that a non-clairvoyant algorithm is oblivious to the processing requirement~$p_j$ but needs access to the machine rates~$\ell_{ij}$ to admit a constant competitive ratio~\cite{DBLP:conf/focs/ImKMP14}.
The setting where $\ell_{ij} = 1$ for all jobs~$j$ and machines~$i$ is called identical machine setting, \abc{P}{r_j,pmtn}{\sum w_jC_j}, and the single machine setting without release dates~is~\abc{1}{pmtn}{\sum w_jC_j}.

\subsection{Permutation prediction model}

We propose a prediction model that is heavily inspired by the relevance of the WSPT order (jobs ordered by non-increasing densities~$\dense_{ij} = w_j / p_{ij}$) for scheduling to minimize the total weighted completion time. %

For a %
scheduling instance with job set $[n]$ and a single or multiple identical machines, our prediction is a permutation $\pred: [n]\rightarrow [n]$ of all jobs. Given the aforementioned power of the WSPT order, we %
call the associated permutation of jobs, $\perfectpred$, \emph{perfect prediction}. 

On unrelated machines, the job-to-machine assignment crucially matters. Therefore, we add such an assignment to our prediction. In this most general model, our prediction is defined as~$\pred = {\{\pred_i\}}_{i \in [m]}$, where~$\pred_i$ is the permutation of jobs assigned to machine~$i$, and every job is assigned to exactly one machine. We denote the machine to which job~$j$ is assigned in~$\pred$ by~$m(\pred,j)$. Given a scheduling instance without release dates and the optimal job-to-machine allocation,  it would be optimal to schedule jobs in WSPT order on each machine individually. Therefore, we speak of perfect prediction $\perfectpred={\{\perfectpred_i\}}_{i \in [m]}$, if $\perfectpred_i$ involves exactly those jobs that are scheduled in an optimal solution on machine $i$ and orders them in WSPT order, for each $i\in [m]$.

In the permutation prediction model, jobs still arrive online and, at any time,  an algorithm has access only to predictions on jobs that have been released already. At any release date, the permutation is updated consistently with the previous permutation. That is, the prediction model is not allowed to change the relative order of previously known jobs.

\subsection{Prediction error}

The prediction error defines a measure for the quality of a prediction. It is a crucial element in the design of learning-augmented algorithms. %
Intuitively, the error measure shall quantify the impact that an erroneous prediction has on an (optimal) scheduling algorithm. It is not unnatural to express the error as~$|\opt(\pred)-\opt(\perfectpred)|$, as has been done %
in~\cite{DBLP:conf/nips/BamasMS20,DBLP:journals/corr/EberleLMNS22,DBLP:conf/innovations/LindermayrMS22}, but for more complex scheduling environments the optimal solution is hard to compute and, more importantly,  this error could be even negligible whereas the impact of running an optimal algorithm with the wrong prediction could be significant. The latter is what we want to quantify.
 
In more detail, our error measure shall capture the change in the cost that an optimal schedule must face when two jobs~$j$ and~$j'$ are inverted in a prediction~$\pred$ with respect to~$\perfectpred$. For example, on a single machine without release dates, if~$j$ and its successor~$j'$ in~$\pred$ are swapped in~$\perfectpred$, the schedule that follows~$\pred$ pays an additional cost of~$w_{j'} p_j$ but saves~$w_j p_{j'}$ compared to the schedule that follows~$\perfectpred$. However, in presence of release dates and on multiple machines, just knowing the orders may not allow us to express the change in the exact optimal cost. Therefore, we rely on an approximation as a surrogate for the optimal cost, namely, the {\em change in the sum of weighted completion times when preemptively scheduling jobs in the given priority order},~$\pred$ resp. $\perfectpred$. %

We define two different error measures.  
Firstly, we define our simple error~$\errS$ for predictions that consist of a single permutation on all jobs.
Then, our general measure~$\errR$ describes the quality of permutation predictions with predicted job assignments, $\perfectpred={\{\perfectpred_i\}}_{i \in [m]}$. We show that $\errS$ is special case of~$\errR$.

\begin{definition}%
\label{def:errS}
	For an instance of non-clairvoyant scheduling %
	with permutation prediction $\pred$ consisting of a single permutation, and WSPT order $\perfectpred$, let $\mathcal{I}(J,\pred)=\{(j',j) \in J^2 \mid \perfectpred(j') < \perfectpred(j) \land \pred(j') > \pred(j) \}$ be the set of inverted job pairs. The prediction error of $\pred$ is defined as 
	 \[
     \errS(J,\pred) = \sum_{(j',j) \in \mathcal{I}(J,\pred)} (w_{j'}p_{j} - w_j p_{j'}).
    \]
\end{definition}

This error measures the \emph{exact} change in the objective value, in the absence of release dates. The single permutation prediction and this error will be sufficient for designing algorithms with appealing error-dependency for a single and parallel identical machines. 

For scheduling on unrelated machines and predictions including a job assignment, we need a more elaborate error definition which, nevertheless, follows the same idea. 
Given an instance with job set~$J$ and prediction $\pred={\{\pred_i\}}_{i \in [m]}$, 
we define for every job~$j \in J$ a partial error~$\eta_j$, which measures how much the different positions of~$j$ 
in~$\pred$ resp.~$\perfectpred$ increases the objective value assuming preemptive scheduling according to a given permutation.

To this end, we consider for an arbitrary assignment and permutation $\pi={\{\pi_i\}}_{i\in[m]}$ the schedule that processes at every point in time~$t$ on every machine~$i$ the available job~$j'$ with~$m(\pi,j') = i$ that has the highest priority in~$\pi_i$. 
Let~$A(j)$ denote the set of jobs that are released but unfinished by time $r_j$ and that are assigned to machine~$m(\pi,j)$. Note that~$j \in A(j)$. For two jobs~$j$ and~$j'$ with~$r_j = r_{j'}$ and~$m(\pi, j) = m(\pi, j')$, we assume that they are assigned to the machine in order of their indices.
By denoting the remaining processing requirement of job~$j$ at time~$t$ by~$p_j(t)$, and~$p_{ij}(t) = \ell_{ij}p_j(t)$, the
increase in the schedule's objective value for adding job $j$ to machine~$i = m(\pi, j)$ is equal to
\[
    W_{j}(J, \pi) =  p_{ij} \sum_{\substack{j' \in A(j) \\ \pi_i(j') > \pi_i(j)}} w_{j'} + w_j \Bigg( r_j + \sum_{\substack{j' \in A(j) \\ \pi_i(j') \leq \pi_i(j)}} p_{ij'}(r_j) \Bigg).
\]

\begin{definition}
	For an instance of non-clairvoyant scheduling with permutation prediction $\pred={\{\pred_i\}}_{i \in [m]}$ and perfect prediction $\perfectpred={\{\perfectpred_i\}}_{i \in [m]}$, the prediction error for job $j\in J$ is defined as 
		\[ \eta_j(J,\pred) = W_{j}(J, \pred) - W_j(J, \perfectpred). \]
	The prediction error of $\pred$ is given by~$\errR(J,\pred) = \sum_{j\in J} \eta_j(J,\pred)$.
\end{definition}

It is not difficult to see that $\errR$ reduces to the compact error measure $\errS$ for predictions that consist of a single permutation (without machine assignment) and without release dates.
\begin{proposition}\label{lemma:inversions}
	For a job set $J$ and a permutation prediction $\pred$, if $\pred$ is a single permutation and $r_j=0$, for all $j\in J$, then 
  	\[
      \errR(J,\pred) = \errS(J,\pred). %
    \]
\end{proposition}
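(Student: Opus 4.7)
The plan is to show (a) that $\sum_{j \in J} W_j(J,\pi)$ equals the total weighted completion time $\sum_j w_j C_j^\pi$ of the priority schedule induced by any permutation $\pi$, and then (b) that $\sum_j w_j C_j^\pred - \sum_j w_j C_j^\perfectpred$, once collected over inversions, yields exactly $\errS(J,\pred)$. Combining the two steps gives $\errR=\errS$.

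First I would simplify $W_j$ under the hypotheses: on a single machine $p_{ij}=p_j$ and $p_{ij'}(r_j)=p_{j'}$, and with all $r_j=0$ the index-order tie-breaking rule implies that the set $A(j)$ of jobs released and unfinished by time $r_j$ equals $\{1,\dots,j\}$. Thus
\[ W_j(J,\pi) \;=\; p_j \sum_{\substack{j' \le j \\ \pi(j') > \pi(j)}} w_{j'} \;+\; w_j \sum_{\substack{j' \le j \\ \pi(j') \le \pi(j)}} p_{j'}. \]
For (a) I would sum over $j$ and reorganize the terms by ordered pairs $(a,b)\in J^2$ with $a\le b$: diagonal pairs $a=b$ contribute $w_a p_a$; pairs $a<b$ with $\pi(a)<\pi(b)$ contribute $w_b p_a$; pairs $a<b$ with $\pi(a)>\pi(b)$ contribute $w_a p_b$. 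A parallel expansion of $\sum_j w_j C_j^\pi = \sum_j w_j p_j + \sum_{j\neq j',\,\pi(j')<\pi(j)} w_j p_{j'}$, splitting the double sum according to whether $j<j'$ or $j>j'$, produces exactly the same three groups with the same coefficients. The restriction $j'\le j$ coming from $A(j)=\{1,\dots,j\}$ is essential here, since if $A(j)$ were replaced by $J$ one would obtain $2\,\mathrm{Obj}(\pi)-\sum_j w_j p_j$ instead, doubling every off-diagonal pair.

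Having (a), we have $\errR(J,\pred)=\sum_j W_j(J,\pred)-\sum_j W_j(J,\perfectpred)=\sum_j w_j C_j^\pred - \sum_j w_j C_j^\perfectpred$. For (b) I would observe that for each job $j$ the predecessor set $F(\pi,j):=\{j':\pi(j')\le \pi(j)\}$ satisfies $F(\pred,j)\setminus F(\perfectpred,j)=\{j':(j,j')\in\mathcal{I}(J,\pred)\}$ and $F(\perfectpred,j)\setminus F(\pred,j)=\{j':(j',j)\in\mathcal{I}(J,\pred)\}$ directly from the definition of $\mathcal{I}$. Substituting into $w_j C_j^\pi = w_j \sum_{j'\in F(\pi,j)} p_{j'}$, the common terms cancel and renaming the outer summation variable in each of the two surviving sums produces $\sum_{(a,b)\in\mathcal{I}(J,\pred)}(w_a p_b - w_b p_a)$, which is exactly $\errS(J,\pred)$. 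The main obstacle is the verification of (a): the two summands of $W_j$ do not individually match the usual pieces of $w_j C_j^\pi$, so some careful pair-by-pair bookkeeping is needed to confirm that the two accountings agree in total; once this is in place, step (b) is a direct set-difference calculation.
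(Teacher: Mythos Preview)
Your proof is correct and follows essentially the same idea as the paper's: both arguments reduce $\errR-\errS$ to a double-sum over ordered pairs that collapses to the inversion set $\mathcal{I}(J,\pred)$. The paper works per job, expanding $\eta_j=W_j(J,\pred)-W_j(J,\perfectpred)$ directly and then summing over $j$, whereas you first aggregate $\sum_j W_j(J,\pi)$ into the schedule objective $\sum_j w_j C_j^\pi$ and then take the difference between $\pred$ and $\perfectpred$; this is only an organizational difference, and your explicit observation that the tie-breaking convention forces $A(j)=\{1,\dots,j\}$ (so that $\sum_j W_j$ equals the objective rather than $2\cdot\mathrm{Obj}-\sum_j w_jp_j$) is exactly the point that makes the identity hold.
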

\begin{proof}
  Let~$j \in J$. Observe that $\eta_j(J,\pred) =  W_{j}(J,\pred) - W_j(J,\perfectpred)$ %
  equals under the stated assumptions
  \begin{equation*} \sum_{\substack{j' \in J\\\pred(j) < \pred(j')}} w_{j'} p_{j}  + \sum_{\substack{j' \in J\\ \pred(j') < \pred(j)}} w_j  p_{j'} - \sum_{\substack{j' \in J \\\perfectpred(j) < \perfectpred(j')}} w_{j'} p_{j} -  \sum_{\substack{j' \in J \\ \perfectpred(j') < \perfectpred(j)}} w_j p_{j'}.
  \end{equation*}

  Combining the first with the third sum and the second with the fourth gives
  \begin{align*}
    \sum_{\substack{\substack{j' \in J\\\perfectpred(j) > \perfectpred(j')}\\ \pred(j) < \pred(j')}} w_{j'} p_j  
    - \sum_{\substack{\substack{j' \in J\\\perfectpred(j') < \perfectpred(j)}\\ \pred(j') > \pred(j)}} w_{j} p_{j'}
    = \sum_{\substack{\substack{j' \in J\\\perfectpred(j) > \perfectpred(j')}\\ \pred(j) < \pred(j')}} (w_{j'} p_j - w_{j} p_{j'}). 
  \end{align*}

  Summing over all jobs and inversion pairs $\mathcal{I}$ yields %
  \[ \sum_{(j',j) \in \mathcal{I}(J,\pred)} (w_{j'}p_{j} - w_j p_{j'}) = \errS(J,\pred). \qedhere\]
\end{proof}

\subsection{Properties of the error measure}

Our new error measure satisfies several desired properties such as 
\begin{inparaenum}[$(i)$]
  \item monotonicity,
  \item Lipschitzness,
  \item theoretical learnability, and
  \item practical learnability.
\end{inparaenum}

Im et al.~\cite{DBLP:conf/spaa/Im0QP21} advocate particularly the first two properties. 
\emph{Monotonicity} requires, in the length prediction model, that the error grows 
as more length predictions become incorrect. 
In our setting, we have~$\eta(\pred) = 0$ if~$\pred = \perfectpred$, 
and for any inversion added to~$\pred$, the error grows. 
This is because an inversion~$(j',j) \in \mathcal{I}$ increases the error by $w_{j'}p_{j} - w_j p_{j'}$, since~$\perfectpred(j') < \perfectpred(j)$ implies~$w_{j'} / p_{j'} \geq w_j / p_j$. 
Thus, our definition satisfies monotonicity. 
 
\emph{Lipschitzness} requires the error to bound the absolute difference of the optimal objective values 
for the actual and predicted instance from above. %
Our error definition {\em precisely} measures the cost between a solution 
that follows~$\pred$ and one that follows~$\perfectpred$, when scheduling the actual instance preemptively according to the given order.
Hence, our error measures immediately satisfy Lipschitzness for our prediction~setup. 

Our prediction model is \emph{theoretically learnable} in the framework of PAC-learnability~\cite{shalevB14ML}. 
We show that permutations %
are efficiently PAC-learnable in the agnostic sense w.r.t.\ our error definition~(\Cref{sec:learnability}).
While this theoretic result gives a rather large bound on the required number of samples to get a low prediction error, we further demonstrate that our 
predictions are \emph{learnable and useful in practice}. We 
implement a learning algorithm and show that even a small number of 
seen samples results in a drastic performance improvement of our 
algorithm in practical instances~(\Cref{sec:experiments}). 

In general, it is difficult to compare different prediction and error models. However, we can convert a given length prediction into a permutation prediction by simply computing the WSPT order based on the predicted processing requirements. %
For the case of unrelated machines, we further require predicted 
machine assignments. %
This conversion allows us to compare our error to the previously proposed measures~$\nu$ and~$\ell_1$ for the case of %
\abc{1}{pmtn}{\sum C_j}.

Firstly, we note that our error~$\errS$ is less vulnerable than~$\nu$ and~$\ell_1$ to changes in the predicted instance which do not affect the \emph{structure} of an optimal solution.
Indeed, the optimal solution of an instance with~$p_j=j$ for all~$j \in [n]$ has the same structure as the optimal solution of a predicted instance with~$y_j = j - 1$ for all~$j \in [n]$. One would expect a small error, and indeed~$\errS = 0$. In contrast, previously defined errors are large:~$\nu = \opt(\{\max\{p_j,y_j\}\}) - \opt(\{\min\{p_j,y_j\}\}) = n(n+1)/2 - (n-1)n/2 = n$ and~$\ell_1 = \sum_{j \in J} \abs{p_j - y_j} =n$. 
This shows that our prediction and error seem to capture well the relevant characteristics of an input-prediction in terms of derived actions, while~$\nu$ and~$\ell_1$ also track insignificant numerical differences between the actual and predicted instances. 

In contrast to this example, there are other instances where~$\nu$ and~$\ell_1$ underestimate the actual difficulty that is caused by the inaccuracy of the prediction given to an (optimal) algorithm. %
Im et al.~\cite{DBLP:conf/spaa/Im0QP21} give such an example with~$p_1=y_1=\ldots=p_{n-1}=y_{n-1}=1$ and~$p_n={n^2}$ but~$y_n = 0$. While the structural difference of the optimal solutions for predicted and true values is large ($\errS=\Omega(n^3)$) %
the other error definitions only measure~$\nu = n^2 + n$ and~$\ell_1 = n^2$.

It is not difficult to see that our prediction error never exceeds~$n \ell_1$.
\begin{proposition}
  For any instance of~\abc{1}{pmtn}{\sum C_j} and length prediction,~$\errS \leq n \cdot \ell_1$.
\end{proposition}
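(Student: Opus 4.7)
The plan is to show this by a pairwise analysis of inversions, reducing the inversion contribution $p_j - p_{j'}$ to the $\ell_1$ errors on the two involved jobs.

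First, I specialize the setup: since we are in $1|pmtn|\sum C_j$, all weights equal one, and the permutation prediction $\pred$ obtained from a length prediction $\{y_j\}$ is the SPT order by $y$'s, while $\perfectpred$ is SPT by the true $p$'s. Hence an inversion $(j',j) \in \mathcal{I}(J,\pred)$ satisfies $p_{j'} \le p_j$ and $y_{j'} > y_j$, and $\errS = \sum_{(j',j) \in \mathcal{I}} (p_j - p_{j'})$.

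Next, for each such inversion I would prove the elementary bound
\[
    p_j - p_{j'} \;\le\; |p_j - y_j| + |p_{j'} - y_{j'}|.
\]
Starting from the telescoping identity $p_j - p_{j'} = (p_j - y_j) + (y_j - y_{j'}) + (y_{j'} - p_{j'})$, the middle term is non-positive since $y_{j'} > y_j$, so it can be dropped to give $p_j - p_{j'} \le (p_j - y_j) + (y_{j'} - p_{j'})$. A short case check (on the signs of $p_j - y_j$ and $y_{j'} - p_{j'}$, using that $p_{j'} \le p_j$ and $y_j < y_{j'}$ force at least one of the two terms to be non-negative) then upper-bounds the right-hand side by $|p_j - y_j| + |p_{j'} - y_{j'}|$.

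With the per-inversion bound in hand, I would finish by a counting argument: summing over $\mathcal{I}(J,\pred)$ and regrouping,
\[
  \errS \;\le\; \sum_{(j',j)\in \mathcal{I}(J,\pred)} \bigl(|p_j-y_j| + |p_{j'}-y_{j'}|\bigr) \;=\; \sum_{k \in J} c_k \cdot |p_k - y_k|,
\]
where $c_k$ is the number of inversions in which job $k$ participates (as either coordinate). Since any single job can be inverted with at most $n-1$ other jobs, $c_k \le n-1$, giving $\errS \le (n-1)\ell_1 \le n\cdot \ell_1$. There is no real obstacle here; the only delicate point is the case analysis justifying the per-inversion bound, and even that is a few lines.
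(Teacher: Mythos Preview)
Your proof is correct and follows the same line as the paper's: both use the telescoping identity $p_j - p_{j'} = (p_j - y_j) + (y_j - y_{j'}) + (y_{j'} - p_{j'})$, drop the non-positive middle term, bound the remaining two terms by their absolute values, and then count that each job participates in at most $n-1$ inversions. The only difference is that your ``short case check'' is unnecessary: once you have $p_j - p_{j'} \le (p_j - y_j) + (y_{j'} - p_{j'})$, the bound by $|p_j - y_j| + |p_{j'} - y_{j'}|$ is immediate from $a \le |a|$, with no case analysis on signs required.
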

\begin{proof}
  Consider an instance with job set $J$ and length prediction~${\{y_j\}}_{j \in [n]}$. Let~$\pred$ be the corresponding predicted permutation.
 Since~$(j',j) \in \mathcal{I}(J, \pred)$ implies~$\pred(j') > \pred(j)$, which must be due to~$y_{j} \leq y_{j'}$, we conclude
  \[
    \errS(J, \pred) 
    = \sum_{(j',j) \in \mathcal{I}(J, \pred)} p_j - y_j + y_j - y_{j'} + y_{j'} - p_{j'} 
    \leq \sum_{(j',j) \in \mathcal{I}(J, \pred)} \abs{p_j - y_j} + \abs{p_{j'} - y_{j'}} \leq n \ell_1. \qedhere
  \]
\end{proof}

Our results for non-uniform job weights on a single and identical machines translate to the length prediction model, as one can similarly show that $\errS$ is bounded by the natural weighted generalization of~$n \cdot \ell_1$, that is $\sum_{j' \in J} w_{j'} \sum_{j \in J} \abs{p_j - y_j}$.

\section{Preferential Time Sharing}\label{sec:framework}

We describe a framework for designing %
algorithms for non-clairvoyant scheduling with untrusted predictions, which we apply to several concrete scheduling settings in the following subsections. %

In their seminal paper, %
Kumar~et~al.~\cite{DBLP:conf/nips/PurohitSK18} proposed a single-machine 
time sharing algorithm for executing two algorithms `in parallel', a clairvoyant (assuming predicted processing times to be correct) and a non-clairvoyant algorithm. The rate, at which each of these algorithms is executed, is determined by the confidence parameter~$\lambda \in (0,1)$. 
We extend this idea to a general framework for scheduling jobs with non-uniform weights and arbitrary release dates on unrelated machines.

This technique requires that both algorithms are \emph{monotone}~\cite{DBLP:conf/nips/PurohitSK18}.
\begin{definition}
  A scheduling algorithm is \emph{monotone}, if for two 
  instances with identical inputs but actual job processing 
  requirements~$\{p_1,\ldots,p_n\}$ and~$\{p'_1,\ldots,p'_n\}$ such 
  that~$p_j \leq p'_j$ for all~$j \in [n]$, the objective value of the 
  algorithm for the first instance is at most its objective value for the 
  second one. 
\end{definition}

Given two monotone algorithms~$\A$ and~$\B$ and a confidence parameter~$\lambda \in (0,1)$, we define a new preemptive algorithm: we run on all machines and for every infinitesimal time interval, algorithm~\A in the first~$(1 - \lambda)$-fraction of the interval and algorithm~\B in the remaining~$\lambda$-fraction of the interval.
The new algorithm hides arrived jobs until they are released in the simulated, i.e., slowed down, schedule of~\A resp.~\B.
The following result generalizes a single-machine version without weights and release dates~\cite{DBLP:conf/nips/PurohitSK18}.

\begin{lemma}\label{lemma:time-slice-general}
	Given a parameter~$\lambda \in (0,1)$ and two monotonic algorithms with competitive ratios~$\rho_\A$ and~$\rho_\B$ for the online problem~\abc{R}{r_j,pmtn}{\sum_j w_j C_j}, there exists an algorithm 
  for the same problem with a competitive ratio~$\min \left\{ \frac  {\rho_\A}{1 - \lambda}, \frac{\rho_\B}{\lambda} \right\}$.
\end{lemma}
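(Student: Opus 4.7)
The plan is to follow the time-slicing template of Kumar et al., but adapt the bookkeeping to handle release dates, weights, and machine assignments. Let $\alg$ denote the combined algorithm described immediately before the lemma: within every infinitesimal interval it executes \A's assignment for a $(1-\lambda)$-fraction and \B's for a $\lambda$-fraction on every machine, and it hides a job $j$ from \A (resp.~\B) until the simulated time that \A (resp.~\B) has accumulated reaches $r_j$. Thus at real time $t$, \A has executed a prefix of length $(1-\lambda)t$ of its schedule on the original instance, and \B a prefix of length $\lambda t$.

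First, I would formalize what \A ``executes'' inside $\alg$. Because \B simultaneously works on the same jobs, by the time \A would simulate some scheduling decision, the remaining processing requirement $p_j'$ of any job $j$ is no larger than its original $p_j$. Let $C_j^\A$ denote the completion time of $j$ in a stand-alone run of \A on the original instance, and let $\tilde{C}_j^\A$ denote the simulated time at which \A completes $j$ inside $\alg$ (on the reduced-requirement instance, with \A unaware of \B's contributions). Here the hypothesis of monotonicity is crucial: since \A sees processing requirements that are pointwise no larger than the original ones, monotonicity yields $\tilde{C}_j^\A \le C_j^\A$. The same argument gives $\tilde{C}_j^\B \le C_j^\B$.

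Next, I translate simulated times into real times. Since \A advances at rate $(1-\lambda)$ and \B at rate $\lambda$ in $\alg$, \A completes $j$ in real time $\tilde{C}_j^\A/(1-\lambda)$ and \B in real time $\tilde{C}_j^\B/\lambda$. The actual completion time $C_j^{\alg}$ of $j$ in $\alg$ is upper bounded by whichever of the two finishes $j$ first, so
\[
  C_j^{\alg} \;\le\; \min\!\left\{\frac{\tilde{C}_j^\A}{1-\lambda},\ \frac{\tilde{C}_j^\B}{\lambda}\right\} \;\le\; \min\!\left\{\frac{C_j^\A}{1-\lambda},\ \frac{C_j^\B}{\lambda}\right\}.
\]
Multiplying by $w_j$, summing over $j$, and using $\sum_j w_j \min\{a_j,b_j\} \le \min\{\sum_j w_j a_j,\sum_j w_j b_j\}$ together with the competitive guarantees of \A and \B yields
\[
  \sum_{j} w_j C_j^{\alg} \;\le\; \min\!\left\{\frac{1}{1-\lambda}\sum_j w_j C_j^\A,\ \frac{1}{\lambda}\sum_j w_j C_j^\B\right\} \;\le\; \min\!\left\{\frac{\rho_\A}{1-\lambda},\ \frac{\rho_\B}{\lambda}\right\}\cdot \opt,
\]
which is the claimed bound.

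The step I expect to require the most care is the monotonicity argument in the unrelated-machines, release-date setting: when \A and \B assign $j$ to different machines $i_\A \neq i_\B$, the processing that \B performs at rate $\ell_{i_\B j}$ must be correctly translated to a reduction of \A's remaining requirement (and vice versa). I would phrase the reduction in terms of the machine-independent remaining requirement $p_j(t)$ so that the pointwise inequality $p_j'\le p_j$ needed for monotonicity holds regardless of which machines the two algorithms choose, after which the rest of the argument goes through verbatim.
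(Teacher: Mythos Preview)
Your proposal follows the same time-slicing argument as the paper's (brief) proof, just spelled out in more detail, and the overall structure is correct. One imprecision worth flagging: you invoke monotonicity to obtain the \emph{per-job} inequality $\tilde{C}_j^{\A} \le C_j^{\A}$, but monotonicity as defined in the paper only guarantees that the \emph{total} weighted objective does not increase under pointwise smaller processing requirements; it does not rule out that some individual completion time grows when another job shrinks. This does not break your argument, since you do not actually need the per-job minimum: simply bound $\sum_j w_j C_j^{\alg} \le \tfrac{1}{1-\lambda}\sum_j w_j \tilde{C}_j^{\A}$ from the time scaling, then apply monotonicity in aggregate to get $\sum_j w_j \tilde{C}_j^{\A}\le \sum_j w_j C_j^{\A}\le \rho_{\A}\cdot\opt$, do the same for~\B, and take the minimum of the two totals afterwards---which is precisely how the paper's proof proceeds.
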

\begin{proof}
  Assume that the competitive ratios of \A and \B are at most~$\rho_\A$ and~$\rho_\B$. 
  By monotonicity of both algorithms, 
  whenever one algorithm processes a job, the other one will not have a higher objective value due to shorter processing requirements.
  Since we execute \A for a~$(1 - \lambda)$-fraction of time and \B for a~$\lambda$-fraction of time, the weighted completion time of a job increases by a factor of at most~$1/(1-\lambda)$ resp.~$1/\lambda$ compared to the schedules of~$\A$ resp.~$\B$, which implies the competitive ratio of the new algorithm.
\end{proof}

Our {\em Preferential Time Sharing} framework crucially builds on \Cref{lemma:time-slice-general} and takes as input two monotone algorithms, a clairvoyant algorithm~$\A^C$ with a competitive ratio of at most~$\rho_C$ and a non-clairvoyant algorithm~$\A^N$ with a competitive ratio of at most~$\rho_N$. Intuitively, the non-clairvoyant algorithm will ensure robustness, while the clairvoyant algorithm, being executed based on the given predictions, gives a good consistency. As~$\A^C$ will have access to predictions while being oblivious of true processing requirements, we call it {\em prediction-clairvoyant}. 
Our framework then gives, using~\Cref{lemma:time-slice-general} with $\A = \A^C$ and $\B = \A^N$, a time sharing algorithm with consistency~$\rho_C/(1-\lambda)$ and robustness~$\rho_N/\lambda$.

When aiming for error-sensitive guarantees, we require an error-dependent performance guarantee for~$\A^C$.
\begin{definition}  
  A prediction-clairvoyant algorithm is~$\eta$-\emph{error-dependent} for an error measure~$\eta$  if its objective value is bounded by~$\rho_C \cdot \opt(J) + \eta(J,\pred)$ for any instance~$J$ and prediction~$\pred$. 
\end{definition}

We note that these definitions are independent of the used prediction model. 
A straightforward consequence is as follows.

\begin{corollary}\label{coro:framework}
  Preferential Time Sharing with a monotone,~$\eta$-error-dependent algorithm~$\A^C$ with  competitive ratio at most~$\rho_C$ and a monotone, non-clairvoyant algorithm~$\A^N$ with competitive ratio at most~$\rho_N$ has, for every~$\lambda \in (0,1)$, a competitive ratio of~at~most
  \[
    \min \left\{ \frac{1}{1-\lambda} \left(\rho_C + \frac{\eta}{\opt} \right), \frac{\rho_N}{\lambda}\right\}
  \]
  for non-clairvoyant scheduling with predictions %
  \abc{R}{r_j, pmtn}{\sum w_j C_j}.
\end{corollary}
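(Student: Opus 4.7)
The plan is to derive the corollary directly from \Cref{lemma:time-slice-general}, using the error-dependence of $\A^C$ to replace its nominal competitive ratio $\rho_C$ with an ``effective'' ratio that absorbs the prediction error. No new scheduling argument is needed; the work is purely to reconcile the error-dependent bound with the form required by the lemma.

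Concretely, I would instantiate Preferential Time Sharing with $\A = \A^C$ and $\B = \A^N$. By monotonicity of both algorithms and the time-sharing construction, \Cref{lemma:time-slice-general} already guarantees that the resulting algorithm's objective value on any instance $J$ is at most $\min\bigl\{\alg^C(J)/(1-\lambda),\ \alg^N(J)/\lambda\bigr\}$, where $\alg^C(J)$ and $\alg^N(J)$ denote the objective values of $\A^C$ (run on prediction $\pred$) and $\A^N$ on $J$. This is essentially an unpacking of the lemma's proof: the time-sharing only stretches each algorithm's schedule by factors $1/(1-\lambda)$ and $1/\lambda$, respectively.

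Next I would apply the two available upper bounds separately. For $\A^N$, we directly have $\alg^N(J) \leq \rho_N \cdot \opt(J)$, yielding the second term $\rho_N/\lambda$ in the minimum. For $\A^C$, the error-dependence hypothesis gives $\alg^C(J) \leq \rho_C \cdot \opt(J) + \eta(J,\pred)$. Dividing by $\opt(J)$ and then by $(1-\lambda)$ yields the first term $\frac{1}{1-\lambda}\bigl(\rho_C + \eta/\opt\bigr)$. Taking the minimum of the two bounds gives exactly the claimed competitive ratio.

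The only subtlety worth flagging is that \Cref{lemma:time-slice-general} is stated in terms of ordinary competitive ratios, whereas here the first algorithm has only an error-dependent guarantee. The way I would address this in the write-up is to point out that the proof of \Cref{lemma:time-slice-general} never uses that $\rho_\A$ is a competitive constant; it only uses the per-schedule monotonicity to bound the time-shared schedule's objective by $\alg^C(J)/(1-\lambda)$ and $\alg^N(J)/\lambda$. Substituting the error-dependent upper bound on $\alg^C(J)$ at the end is therefore immediate, and this is the entire proof.
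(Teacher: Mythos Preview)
Your proposal is correct and matches the paper's approach: the paper presents this corollary as ``a straightforward consequence'' of \Cref{lemma:time-slice-general} without giving a separate proof, and your argument spells out exactly that straightforward consequence, including the right observation that the lemma's proof bounds the time-shared objective by the per-instance values $\alg^C(J)/(1-\lambda)$ and $\alg^N(J)/\lambda$, into which the error-dependent bound on $\alg^C(J)$ can be substituted.
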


In the following subsections, we apply the Preferential Time Sharing framework %
to different concrete scheduling problems and prove our main algorithmic results. %
This requires: 
\begin{enumerate}[(i)]
  \item develop a monotone prediction-clairvoyant algorithm~$\A^C$ with error-dependent competitive ratio; and
  \item select an applicable non-clairvoyant monotone algorithm.
\end{enumerate}
By~\Cref{coro:framework}, both algorithms combined give the desired performance bounds for preemptive scheduling with predictions. While non-clairvoyant algorithms for our problems are available in the literature, our main contribution lies in designing prediction-clairvoyant algorithms with provable low error-dependency.

\subsection{Single machine}\label{sec:single-machine}
Consider non-clairvoyant scheduling of weighted jobs on a single machine, \abc{1}{pmtn}{\sum w_jC_j}.

\paragraph{Prediction-clairvoyant algorithm.} %
It is well-known that scheduling non-preemptively in the order given by $\pred$ gives the optimal schedule~\cite{smith1956various} if~$\pred$ coincides with the WSPT order. We refer to this algorithm as prediction-clairvoyant WSPT. It is monotone since, for a fixed prediction, shrinking a job does not affect~$\pred$ and only results in a lower completion time for this job and all its successors in~$\pred$. We now show that it is~$\errS$-error-dependent.

\begin{lemma}\label{lemma:single-machine-error}
  The prediction-clairvoyant WSPT algorithm is~$\errS$-error-dependent.
\end{lemma}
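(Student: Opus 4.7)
The plan is to establish the equality $\alg(J,\pred) = \opt(J) + \errS(J,\pred)$, which immediately implies $\errS$-error-dependency with $\rho_C = 1$ (using that WSPT on a single machine is exactly optimal, see \cite{smith1956various}). Since the prediction-clairvoyant algorithm processes jobs non-preemptively in the order~$\pred$, and an optimal schedule processes them non-preemptively in the WSPT order~$\perfectpred$, everything reduces to comparing the costs of two permutation schedules on the same job set.

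The first step is to rewrite the objective value of scheduling non-preemptively in an arbitrary permutation~$\pi$ in pair form:
\[
  \sum_{j \in J} w_j C_j^\pi \;=\; \sum_{j \in J} w_j p_j \;+\; \sum_{\substack{(j',j) \in J^2 \\ \pi(j') < \pi(j)}} w_j p_{j'}.
\]
Applying this identity to both $\pi = \pred$ and $\pi = \perfectpred$, the $\sum_j w_j p_j$ terms cancel, so
\[
  \alg(J,\pred) - \opt(J) \;=\; \sum_{\substack{(j',j) \\ \pred(j') < \pred(j)}} w_j p_{j'} \;-\; \sum_{\substack{(j',j) \\ \perfectpred(j') < \perfectpred(j)}} w_j p_{j'}.
\]

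The second step is to analyze this difference pair-by-pair. For any unordered pair $\{a,b\}$, exactly one of $\pred(a) < \pred(b)$ or $\pred(b) < \pred(a)$ holds, and similarly for $\perfectpred$. If the two permutations agree on the relative order of $a$ and $b$, the contributions in the two sums coincide and cancel. Otherwise the pair is inverted: say $\perfectpred(a) < \perfectpred(b)$ and $\pred(a) > \pred(b)$, i.e.\ $(a,b) \in \mathcal{I}(J,\pred)$. Then the first sum contributes $w_a p_b$ (since $\pred(b) < \pred(a)$) while the second contributes $w_b p_a$ (since $\perfectpred(a) < \perfectpred(b)$), producing a net $w_a p_b - w_b p_a$, which is precisely the term attached to $(a,b)$ in~\Cref{def:errS}.

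Summing over all unordered pairs and grouping inverted ones yields
\[
  \alg(J,\pred) - \opt(J) \;=\; \sum_{(j',j) \in \mathcal{I}(J,\pred)} \bigl( w_{j'} p_j - w_j p_{j'} \bigr) \;=\; \errS(J,\pred),
\]
which gives the claimed bound $\alg(J,\pred) \leq \opt(J) + \errS(J,\pred)$. There is no real obstacle here — the proof is essentially a bookkeeping argument — but the one point that deserves care is the sign convention in the definition of $\mathcal{I}$: one must verify that for each inverted pair $(j',j)$, it is $w_{j'}p_j$ (and not $w_j p_{j'}$) that is the added cost of following $\pred$ over $\perfectpred$. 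Since $\perfectpred(j') < \perfectpred(j)$ says $j'$ is WSPT-earlier, hence $w_{j'}/p_{j'} \geq w_j/p_j$, this also explains why each term in $\errS$ is nonnegative, which is consistent with $\alg \geq \opt$.
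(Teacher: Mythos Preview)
Your proof is correct and follows essentially the same approach as the paper: both express the cost of a non-preemptive permutation schedule in the pair form $\sum_j w_j p_j + \sum_{\pi(j')<\pi(j)} w_j p_{j'}$, and both identify the difference $\alg(J,\pred)-\opt(J)$ with the sum over inverted pairs $\sum_{(j',j)\in\mathcal{I}} (w_{j'}p_j - w_j p_{j'}) = \errS(J,\pred)$. The paper presents the manipulation slightly differently (it indexes jobs by $\perfectpred$ and introduces an auxiliary quantity $d(j',j)$ for the amount of $j'$ processed before $j$ completes, then adds and subtracts $w_j p_{j'}$ on inverted pairs to extract $\opt$), whereas you write the identity for both $\pred$ and $\perfectpred$ and subtract directly---but the argument is the same bookkeeping in both cases.
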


\begin{proof} 
	Consider an instance~$J$ with jobs being indexed by~$\perfectpred$, a prediction~$\pred$, and the schedule obtained by the prediction-clairvoyant WSPT algorithm.
  In this schedule, let~$d(j',j)$ denote the amount of job~$j'$ that has been processed before job~$j$ completed. Thus,~$d(j',j) = p_{j'}$ if and only if~$\pred(j') < \pred(j)$. This implies 
  \begin{align*}
  \alg(J,\pred)
      &= \sum_{j=1}^n w_j p_j + \sum_{j = 1}^n \sum_{j'=1}^{j-1} \left( w_j \cdot d(j',j) + w_{j'} \cdot d(j,j') \right) \\
      &= \sum_{j=1}^n w_j p_j +  \sum_{j = 1}^n \sum^{j-1}_{\substack{j'=1\\ \pred(j') < \pred(j)}} w_j p_{j'} +  \sum_{j = 1}^n \sum^{j-1}_{\substack{j'=1 \\ \pred(j') > \pred(j)}} w_{j'} p_j \\
      &= \sum_{j=1}^n w_j \sum_{j'=1}^{j} p_{j'} + \sum_{j = 1}^n \sum^{j-1}_{\substack{j'=1 \\ \pred(j') > \pred(j)}} (w_{j'} p_j - w_j p_{j'}) \\
      &= \opt(J) + \errS(J,\pred).
  \end{align*}

  The last equation holds since the first sum equals the objective value of the true WSPT schedule, i.e., a schedule according to~$\perfectpred$, which is optimal and the second sum equals~$\errS(J,\pred)$ by~\Cref{def:errS}, since we assumed the jobs to be indexed according~to~$\perfectpred$.
\end{proof}

\paragraph{Non-clairvoyant algorithm.} The {\em Weighted Round Robin (WRR)} algorithm distributes processing rates across all alive jobs proportional to their weights. Motwani et al.~\cite{DBLP:journals/tcs/MotwaniPT94} showed that the algorithm has a competitive ratio of~$2$ for jobs with uniform weights, and Kim and Chwa~\cite{DBLP:journals/ipl/KimC03a} proved the same competitive ratio for arbitrary weights. In both cases, this ratio is best possible. It is not difficult to see that WRR is monotone, since shrinking a job's processing requirement only decreases its completion time and thus gives all other jobs  more rate earlier, also reducing their completion time.

By~\Cref{coro:framework} we conclude with the following result.

\begin{theorem}\label{thm:single-machine}
  Preferential Time Sharing with the prediction-clairvoyant WSPT algorithm and the non-clairvoyant WRR algorithm has, for every~$\lambda \in (0,1)$,
  a competitive ratio of at most
  \[
  \min \left\{ \frac{1}{1 - \lambda}\left(1 + \frac{\errS}{\opt} \right), \frac{2}{\lambda} \right\}
  \]
  for non-clairvoyant scheduling with predictions~\abc{1}{pmtn}{\sum w_j C_j}.
\end{theorem}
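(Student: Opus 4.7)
The plan is to verify that both ingredients required by \Cref{coro:framework} are in place and then invoke the corollary directly, so there is no new technical work beyond assembling facts already established in the subsection.

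First I would confirm the prediction-clairvoyant side: the prediction-clairvoyant WSPT algorithm schedules jobs non-preemptively in the order given by $\pred$, which is monotone because shrinking any job's processing requirement cannot change the order induced by $\pred$ and can only decrease the completion time of that job and all successors in $\pred$. By \Cref{lemma:single-machine-error}, this algorithm is $\errS$-error-dependent with $\rho_C = 1$, since for perfect prediction the WSPT schedule is optimal by Smith's rule and the excess cost is exactly~$\errS$.

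Second I would confirm the non-clairvoyant side: WRR is $2$-competitive for \abc{1}{pmtn}{\sum w_jC_j} by Kim and Chwa~\cite{DBLP:journals/ipl/KimC03a}, extending Motwani et al.~\cite{DBLP:journals/tcs/MotwaniPT94}, so $\rho_N = 2$. Monotonicity of WRR follows from the observation mentioned in the excerpt: reducing a job's requirement only makes it finish sooner, which can only release rate to the remaining jobs earlier, hence weakly decreasing all completion times.

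Finally I would plug $\rho_C = 1$, $\rho_N = 2$, error measure $\eta = \errS$, and the two monotone algorithms above into \Cref{coro:framework}, which immediately yields the claimed bound
\[
\min\left\{\frac{1}{1-\lambda}\left(1 + \frac{\errS}{\opt}\right), \frac{2}{\lambda}\right\}.
\]
The main (minor) obstacle is just being explicit that both algorithms satisfy the monotonicity hypothesis needed by \Cref{lemma:time-slice-general}, which underlies the corollary; beyond this bookkeeping the theorem is a one-line consequence of the framework.
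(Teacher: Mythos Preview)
Your proposal is correct and matches the paper's approach exactly: the paper also establishes monotonicity and $\errS$-error-dependence of prediction-clairvoyant WSPT (with $\rho_C=1$), monotonicity and $2$-competitiveness of WRR, and then concludes the theorem as a direct consequence of \Cref{coro:framework}. There is no additional argument in the paper beyond what you outline.
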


\subsection{Identical parallel machines}\label{sec:identical-machines}

Consider non-clairvoyant scheduling of weighted jobs with release dates on~$m$ identical parallel machines, \abc{P}{r_j,pmtn}{\sum w_jC_j}. As prediction we assume a single permutation $\pred$ over all jobs, i.e., we do not require a machine assignment.

\paragraph{Prediction-clairvoyant algorithm.} Consider the {\em preemptive WSPT (P-WSPT)} algorithm that schedules, at any moment in time, the~$m$ available jobs with the highest priority in the predicted order~$\pred$. Assuming~$\pred$ is a perfect prediction and gives the true WSPT order, P-WSPT is known to be~$2$-competitive~\cite{DBLP:journals/orl/MegowS04}. Further notice that, for a fixed permutation prediction, smaller processing requirements will not increase the objective value of this algorithm. Thus, it is monotone. We show the following error-dependence.
\begin{lemma}\label{lemma:identical-machines-error}
  The prediction-clairvoyant P-WSPT algorithm is~$(\errS/m)$-error-dependent.
\end{lemma}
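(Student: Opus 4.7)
The plan is to adapt the proof of Lemma~\ref{lemma:single-machine-error} from the single-machine setting to $m$ identical machines with release dates, where the main structural change is a factor-$1/m$ scaling on the inversion error that arises from parallelism across the machines. Concretely, I would aim to show $\alg(J,\pred) \le 2\,\opt(J) + \errS(J,\pred)/m$.

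My approach is to establish two matching LP-style inequalities whose release-date contributions $R(J)$ cancel. The first is an upper bound on the prediction-clairvoyant P-WSPT objective valid for every priority order $\pi$,
\[
\alg(J,\pi) \;\le\; \sum_{j\in J} w_j p_j \;+\; \frac{1}{m}\sum_{\substack{j'\neq j\\ \pi(j')<\pi(j)}} w_j p_{j'} \;+\; R(J),
\]
where the $1/m$ factor reflects that the higher-priority work delaying a job is shared across $m$ parallel machines. The second is a matching LP-based lower bound on the optimum, in the spirit of Megow and Schulz~\cite{DBLP:journals/orl/MegowS04},
\[
2\,\opt(J) \;\ge\; \sum_{j\in J} w_j p_j \;+\; \frac{1}{m}\sum_{\substack{j'\neq j\\ \perfectpred(j')<\perfectpred(j)}} w_j p_{j'} \;+\; R(J),
\]
with the same $R(J)$; the factor $2$ absorbs the gap between the mean busy-time LP value and the actual $\sum w_j C_j$ objective. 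Instantiating the upper bound at $\pi = \pred$ and subtracting the lower bound yields
\[
\alg(J,\pred) - 2\,\opt(J) \;\le\; \frac{1}{m}\sum_{j \in J} w_j \Bigl(\sum_{\pred(j')<\pred(j)} p_{j'} - \sum_{\perfectpred(j')<\perfectpred(j)} p_{j'}\Bigr).
\]
An algebraic rearrangement identical to the one used at the end of the proof of Lemma~\ref{lemma:single-machine-error}---indexing jobs by $\perfectpred$, and for each pair on which $\pred$ and $\perfectpred$ disagree collecting the $(w_{j'}p_j - w_j p_{j'})$ contribution---rewrites the right-hand side as $(1/m)\sum_{(j',j)\in\mathcal{I}(J,\pred)}(w_{j'}p_j - w_j p_{j'}) = \errS(J,\pred)/m$, giving the claimed error-dependence.

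The main obstacle is making the release-date contribution $R(J)$ symmetric and cancel cleanly between the two bounds. Without release dates the upper and lower bounds follow from a direct completion-time calculation and a standard WSPT LP bound; with release dates the cleanest route is via the mean busy-time LP relaxation, where for any order $\pi$ the P-WSPT completion time $C_j^\pi$ admits an upper bound of the form $C_j^\pi \le p_j + 2 M_j^\pi$ and $\opt$ is at least the LP value $\sum_j w_j M_j^{\perfectpred,\mathrm{LP}}$ with release-date terms matching exactly. Together with the rearrangement above, this yields the $(\errS/m)$-error-dependence.
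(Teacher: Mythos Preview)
Your plan is essentially the paper's proof: upper-bound $\alg$ by $\sum_j w_j(r_j+p_j)+\tfrac1m\sum_{\pred(j')<\pred(j)} w_j p_{j'}$ via a direct delay-counting argument, lower-bound $2\,\opt$ by the sum of the two elementary bounds $\opt\ge\sum_j w_j(r_j+p_j)$ and $\opt\ge\tfrac1m\sum_j w_j\sum_{j'<j}p_{j'}$ (the speed-$m$ single-machine WSPT relaxation), subtract, and rearrange into $\errS/m$. The release-date term is simply $R(J)=\sum_j w_j r_j$, which cancels trivially between the two bounds---no mean busy-time LP or bound of the form $C_j^\pi\le p_j+2M_j^\pi$ is needed, so the obstacle you flag does not actually arise.
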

\begin{proof}
  Consider an instance~$J$ with jobs being indexed by~$\perfectpred$, a prediction~$\pred$, and the schedule obtained by the prediction-clairvoyant P-WSPT.
  After job~$j$ has been released, it is either being processed on a machine or it is delayed by another job.
	Let~$d(j',j)$ denote the total amount of job~$j'$ that delays the completion of~$j$. Note that~$d(j',j) \leq p_{j'}$.
  Such a delay can only occur if there are at least~$m$ alive jobs before~$j$ in~$\pred$, and these jobs will be distributed over all~$m$ machines.
  Since~$j$ has received $p_j$ units of processing by its completion time, we conclude
  \begin{align*}
      \alg(J, \pred) &\leq \sum_{j=1}^n w_j (r_j + p_j) + \frac{1}{m} \sum_{j=1}^n \sum_{j' = 1}^{j-1} \left( w_j \cdot d(j',j) + w_{j'} \cdot d(j,j') \right) \\
      &\leq  \opt(J) + \frac{1}{m} \sum_{j=1}^n \sum^{j-1}_{\substack{j' = 1\\\pred(j') < \pred(j)}} w_j p_{j'} + \frac{1}{m} \sum_{j=1}^n \sum^{j-1}_{\substack{j'=1\\ \pred(j') > \pred(j)}} w_{j'} p_j \\
      &=  \opt(J) + \frac{1}{m} \sum_{j=1}^n w_j \sum_{j' = 1}^{j-1} p_{j'} + \frac{1}{m} \sum_{j=1}^n \sum^{j-1}_{\substack{j'=1\\ \pred(j') > \pred(j)}} (w_{j'} p_j - w_j p_{j'}) \\
      &\leq 2 \cdot \opt(J) + \frac{1}{m} \cdot \errS(J,\pred).
  \end{align*}

  The second and third inequality hold due to two classical lower bounds on an optimal solution: Every job has to be processed by at least its %
  $p_j$ after its release in any solution. And~$\frac{1}{m} \sum_{j=1}^n w_j \sum_{j' = 1}^{j-1} p_{j'}$ equals the objective value of the WSPT schedule on a single machine with speed~$m$ without release dates, which is a known relaxation of our problem and therefore also a lower bound on~$\opt(J)$. Since we assumed that the jobs are indexed according to~$\perfectpred$, the sum of inversions is equal to~$\errS(J,\pred)$ by~\Cref{def:errS}.  %
\end{proof}

\paragraph{Non-clairvoyant algorithm.} Beaumont et el.~\cite{DBLP:conf/ipps/BeaumontBEM12} analyzed a natural extension of the WRR algorithm~\cite{DBLP:journals/ipl/KimC03a} %
to identical parallel machines and prove the same competitive ratio of~$2$ for non-clairvoyant \abc{P}{pmtn}{\sum w_jC_j}. Like WRR, their algorithm {\em Weighted Dynamic EQuipartition} (WDEQ) assigns processing rates to jobs  proportional to their weights making sure that no job receives a higher rate than executable on one machine simultaneously. 

When release dates are present, it is not hard to prove that WDEQ has a competitive ratio of at most~$3$. This result might be folkloric. %
To see it, consider the schedules~$S$ and~$S'$ of WDEQ with and without release dates for the same job set.
Let~$C_j$ resp.~$C'_j$ be the completion time of job~$j$ in~$S$ resp.~$S'$.
Notice that the total sum of rates job~$j$ receives in the interval~$[r_j, C_j]$ in~$S$ is not more than in the interval~$[0, C'_j]$ in~$S'$. This is because the total weight of other jobs running during~$[r_j, C_j]$ in~$S$ cannot be higher compared to the case when all jobs are released at the same time, which is the case in~$S'$. 
Thus,~$[r_j, C_j]$ is not longer than~$[0, C'_j]$, giving~$C_j \leq r_j + C'_j$.
The facts that~$\sum_{j} w_j r_j$ is a lower bound on the optimal objective value with release dates and~$\sum_{j} w_j C'_j$ is at most twice the optimal objective value without release dates~\cite{DBLP:conf/ipps/BeaumontBEM12} imply that WDEQ has a competitive ratio of at most~$3$ for~$P|r_j,pmtn|\sum w_j C_J$.

Note that WDEQ is monotone as shrinking a job only decreases its completion time and thus gives other jobs  more rate earlier, which also decreases their completion times.

\begin{lemma}%
  WDEQ is a monotone~$3$-competitive algorithm for the non-clairvoyant version of %
  \abc{P}{r_j,pmtn}{\sum w_j C_j}.
\end{lemma}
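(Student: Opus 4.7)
The plan follows the informal argument sketched in the paragraph preceding the lemma. For the competitive ratio, I compare two WDEQ schedules on the same job set: the schedule $S$ on the given instance $J$ with release dates $\{r_j\}$, and the schedule $S'$ on the instance $J_0$ obtained from $J$ by setting every release date to zero. Writing $C_j$ resp.\ $C'_j$ for the completion times in $S$ resp.\ $S'$, the crucial pointwise inequality is $C_j \leq r_j + C'_j$ for every job $j$. Summing weighted completion times then gives
\[
  \sum_{j \in J} w_j C_j \;\leq\; \sum_{j \in J} w_j r_j \;+\; \sum_{j \in J} w_j C'_j \;\leq\; \opt(J) + 2\,\opt(J) = 3\,\opt(J),
\]
where $\sum_j w_j r_j \leq \opt(J)$ holds because every feasible schedule satisfies $C_j \geq r_j$, and $\sum_j w_j C'_j \leq 2\,\opt(J_0) \leq 2\,\opt(J)$ combines the $2$-competitiveness of WDEQ on the no-release-date problem \abc{P}{pmtn}{\sum w_j C_j} due to Beaumont et al.~\cite{DBLP:conf/ipps/BeaumontBEM12} with the fact that $J_0$ is a relaxation of $J$.

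The heart of the argument is the inequality $C_j \leq r_j + C'_j$, which I would establish by a coupling between $S$ and $S'$. The plan is to maintain, as an invariant, that at every time $r_j + t$ with $j$ still alive in $S$, the total weight of jobs alive in $S$ at time $r_j + t$ is bounded from above by the total weight of jobs alive in $S'$ at time $t$. Under WDEQ, the instantaneous rate of every alive job is nonincreasing in the competing weight (modulo the per-machine cap of one), so this invariant implies that $j$ accumulates processing in $S$, shifted by $r_j$, at least as quickly as in $S'$. Integrating from $t = 0$ to $t = C'_j$ then shows that $j$ has received its full $p_j$ units of processing in $S$ by time $r_j + C'_j$. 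The invariant itself I would verify by induction along the interleaved sequence of release and completion events in the two schedules: at $t=0$ it reduces to the trivial bound by the full weight $\sum_k w_k$ that is alive at time $0$ in $S'$; release events in $S$ only add jobs that were already alive in $S'$; and completion events on either side are analyzed using the induction hypothesis that matched competitors in $S$ have received at least as much processing as in $S'$.

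For monotonicity, I proceed by induction over completion events. Decreasing some job $k$'s processing requirement makes both WDEQ schedules identical up to $k$'s (new, earlier) completion time. Thereafter, the alive set in the shrunken schedule is contained in the alive set of the original one at every instant, so every remaining job receives at least as much rate and inductively its completion time does not exceed the original one. Hence $\sum_j w_j C_j$ does not increase, proving monotonicity.

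The main obstacle is establishing the coupling invariant for $C_j \leq r_j + C'_j$. Although the intuition ``fewer releases means less competition for $j$'' is clear, the two schedules have generically different completion events occurring in different orders, so verifying the invariant across every interleaving of release and completion events, and handling WDEQ's per-machine rate cap correctly when large-weight jobs would otherwise exceed rate one, is the technically delicate step.
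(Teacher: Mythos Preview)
Your overall decomposition matches the paper's exactly: compare the WDEQ schedule $S$ with release dates to the WDEQ schedule $S'$ without release dates, prove $C_j \le r_j + C'_j$, and then combine the lower bound $\sum_j w_j r_j \le \opt(J)$ with the $2$-competitiveness of WDEQ on \abc{P}{pmtn}{\sum w_j C_j}. Your monotonicity argument is also the same as the paper's. The problem is the specific invariant you propose for establishing $C_j \le r_j + C'_j$. The claim that the total alive weight in $S$ at time $r_j + t$ is at most the total alive weight in $S'$ at time $t$ is false. Take a single machine, $w_1=w_2=1$, $p_1=10$, $p_2=1$, $r_1=0$, $r_2=5$, and let $j=1$. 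In $S'$ job~$2$ completes at time~$2$, so at $t=5$ only job~$1$ is alive and $W'(5)=1$. In $S$ job~$2$ is released at time~$5$, so $W(r_1+5)=W(5)=2>1=W'(5)$. Hence job~$1$'s instantaneous rate in $S$ at time~$5$ is $1/2$, strictly less than its rate~$1$ in $S'$ at time~$5$; pointwise rate domination fails, and your integration step cannot be carried out. The inductive verification you sketch over interleaved events cannot succeed because the statement it is meant to prove is simply not true.

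The inequality $C_j \le r_j + C'_j$ does hold, but the right comparison is per competitor rather than per time instant. On a single machine (WRR), whenever $j$ and $k$ are simultaneously alive their rates are in ratio $w_j:w_k$, so during $[r_j,C_j]$ job $k$ receives at most $\min\{p_k,(w_k/w_j)\,p_j\}$ units of processing; in $S'$ each $k$ receives \emph{exactly} $\min\{p_k,(w_k/w_j)\,p_j\}$ during $[0,C'_j]$ because all jobs are present from the start. Since $C_j - r_j = p_j + \sum_{k\neq j} d_k$ and $C'_j = p_j + \sum_{k\neq j} d'_k$ with $d_k \le d'_k$ termwise, the inequality follows. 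For WDEQ on $m$ machines the same per-job delay comparison goes through with the rate cap handled; this is what the paper's informal sentence about ``total weight of other jobs running'' is gesturing at. Replace your pointwise-weight invariant with this per-competitor bound and the proof closes.
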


By~\Cref{coro:framework} we conclude with the following result.

\begin{theorem}\label{thm:identical-machine}
	Preferential Time Sharing with the prediction-clairvoyant P-WSPT algorithm and the non-clairvoyant WDEQ algorithm has, for every~$\lambda \in (0,1)$, a competitive ratio of 
  \[
  \min \left\{ \frac{1}{1 - \lambda}\left(2 + \frac{\errS}{m \cdot \opt} \right), \frac{3}{\lambda} \right\}
  \]
  for non-clairvoyant scheduling with predictions on $m$ identical parallel machines,~\abc{P}{r_j,pmtn}{\sum w_j C_j}.
\end{theorem}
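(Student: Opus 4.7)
The plan is to apply the Preferential Time Sharing framework, specifically Corollary~\ref{coro:framework}, instantiated with the prediction-clairvoyant P-WSPT algorithm in the role of $\A^C$ and the non-clairvoyant WDEQ algorithm in the role of $\A^N$. All the ingredients needed for this instantiation have already been developed in the current subsection, so the proof should reduce to verifying the two hypotheses of Corollary~\ref{coro:framework} and substituting the constants.

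First, I would verify the requirements for $\A^C$. Monotonicity of P-WSPT is noted in the paragraph preceding Lemma~\ref{lemma:identical-machines-error}: for a fixed permutation prediction, shrinking processing requirements does not change the dispatching rule and can only push completion times earlier. Lemma~\ref{lemma:identical-machines-error} then provides the bound
\[
\alg(J,\pred) \;\leq\; 2 \cdot \opt(J) + \frac{1}{m}\,\errS(J,\pred),
\]
which is exactly the $(\errS/m)$-error-dependency condition with $\rho_C = 2$. Second, I would verify the requirements for $\A^N$: the lemma stated immediately before the theorem asserts that WDEQ is monotone and $3$-competitive for the non-clairvoyant version of \abc{P}{r_j,pmtn}{\sum w_jC_j}, giving $\rho_N = 3$.

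With these values in hand, the final step is a direct substitution. Plugging $\rho_C = 2$, $\eta = \errS/m$, and $\rho_N = 3$ into Corollary~\ref{coro:framework} yields
\[
\min \left\{ \frac{1}{1-\lambda}\left(2 + \frac{\errS}{m \cdot \opt} \right),\; \frac{3}{\lambda} \right\},
\]
which matches the stated bound exactly.

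I do not expect a real obstacle: all the conceptual work has been done in Lemma~\ref{lemma:identical-machines-error} (the error analysis of P-WSPT via delay accounting and the two standard lower bounds $\sum_j w_j r_j$ and the single-fast-machine WSPT relaxation) and in the folklore analysis of WDEQ with release dates. The only small point worth explicitly noting is that Corollary~\ref{coro:framework} is phrased for the unrelated machines setting \abc{R}{r_j,pmtn}{\sum w_jC_j}; since identical parallel machines correspond to the special case $\ell_{ij} = 1$, the corollary specializes to \abc{P}{r_j,pmtn}{\sum w_jC_j} without modification, and the permutation prediction used here (a single permutation without machine assignment) is handled through Proposition~\ref{lemma:inversions}, so that the relevant error measure is indeed $\errS$.
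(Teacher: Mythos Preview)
Your approach is correct and matches the paper's: the theorem is simply Corollary~\ref{coro:framework} instantiated with $\rho_C=2$ from Lemma~\ref{lemma:identical-machines-error} and $\rho_N=3$ from the WDEQ lemma. One small correction: your appeal to Proposition~\ref{lemma:inversions} is both unnecessary and inapplicable, since that proposition assumes $r_j=0$ for all $j$, which fails here; the error measure $\errS$ enters directly through Lemma~\ref{lemma:identical-machines-error}, and Corollary~\ref{coro:framework} is stated abstractly for any error measure~$\eta$, so no reduction to $\errR$ is needed.
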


\subsection{Unrelated machines}\label{sec:unrelated-machines}

We consider our most general non-clairvoyant scheduling problem, preemptive scheduling of weighted jobs on unrelated machines~\abc{R}{r_j,pmtn}{\sum w_j C_j}, with predictions. Recall that we are given a predicted permutation~$\pred_{i}$ for each machine~$i\in [m]$ including a predicted machine allocation~$m(\pred,j)$ for each job~$j$.

\paragraph{Prediction-clairvoyant algorithm}
The best known algorithm for clairvoyant scheduling~\abc{R}{r_j,pmtn}{\sum w_j C_j} by Bienkowski et al.~\cite{DBLP:conf/icalp/BienkowskiKL21} has a competitive ratio of~$3$. It uses a guess-and-double framework and processing times; it is unclear how to run it based on permutation predictions and how to track its error-dependence.

Other clairvoyant algorithms where proposed (for different problems)~\cite{DBLP:journals/mor/MegowUV06,DBLP:conf/soda/AnandGK12,DBLP:journals/mor/GuptaMUX20,DBLP:phd/dnb/Jager21} that use a greedy strategy for assigning jobs to machines in the following way. Assuming a fixed single-machine rule~$\Pi$, they assign a newly arriving job to the machine where it causes the (approximately) minimum increase in the objective value, assuming that jobs on each machine are scheduled according to~$\Pi$. We refer to such algorithm as~{\em MinIncrease~$\Pi$}.

While these algorithms are similar in flavor, none of the existing results proven in the literature seems to directly match our purpose w.r.t.\ the precise scheduling model and the possibility for proving an error-sensitivity.

Most promising seems a result for minimizing the total weighted flow time on unrelated machines, where the flow time of a job~$j$ is defined as~$C_j-r_j$. Anand et al.~\cite{DBLP:conf/soda/AnandGK12} use the \emph{Weighted Shortest Remaining Processing Time first (WSRPT)} rule as single-machine algorithm~$\Pi$, which schedules, at any time~$t$, an available job with largest residual density~$w_j / p_{j}(t)$. 
For the (simpler) objective of minimizing the weighted completion time, WSRPT is known to be~$2$-competitive on a single machine with release dates~\cite{Megow07-Diss}. A straightforward adaption of the analysis in~\cite{DBLP:conf/soda/AnandGK12} shows that MinIncrease WSRPT is~$8$-competitive for our clairvoyant problem. A more careful analysis even proves a competitive ratio of at most~$4$~\cite{DBLP:phd/dnb/Jager21}.
However, it is unclear how to turn this algorithm into a prediction-clairvoyant algorithm in our setting. While the machine assignment is given, we do not have information about (remaining) processing times to apply WSRPT. Further, it is unclear how to obtain an error-dependency for our permutation prediction model, as the order of the jobs given by WSRPT changes when jobs are processed.

Nevertheless, we take inspiration from the analysis, replace \mbox{WSRPT} by preemptive WSPT and adopt the MinIncrease P-WSPT algorithm for our framework. %
We first prove that the clairvoyant MinIncrease P-WSPT algorithm is at most~$5.8284$-competitive using a dual-fitting analysis borrowing different ideas from~\cite{DBLP:phd/dnb/Jager21,DBLP:conf/soda/AnandGK12,DBLP:journals/mor/GuptaMUX20}. Without release dates, our algorithm is essentially the same as the algorithms in~\cite{DBLP:phd/dnb/Jager21,DBLP:conf/soda/AnandGK12,DBLP:journals/mor/GuptaMUX20} and a 
lower bound of $4$ is known~\cite{DBLP:journals/mor/GuptaMUX20,correaQ12unrelatedLB}.
We also prove an error-dependent competitive ratio for its prediction-clairvoyant version with release dates.

\begin{restatable}{theorem}{theoremGreedyWSPT}\label{theorem:unrelated-clairvoyant}
  The MinIncrease P-WSPT algorithm has a competitive ratio of at most~$3 + 2 \sqrt{2} \approx 5.8284$ for clairvoyant scheduling on unrelated machines,~\abc{R}{r_j, pmtn}{\sum w_j C_j}.
\end{restatable}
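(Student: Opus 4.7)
The plan is to bound the algorithm's cost by constructing a feasible dual solution to the standard time-indexed LP relaxation of~\abc{R}{r_j,pmtn}{\sum w_jC_j}. I would use variables $x_{ijt}$ for the amount of job~$j$ processed on machine~$i$ during unit-time slot~$t$, with objective $\sum_{i,j,t}\frac{w_j}{p_{ij}}\bigl(t+\tfrac{p_{ij}}{2}\bigr) x_{ijt}$, coverage constraints $\sum_{i,\,t\geq r_j} x_{ijt}/p_{ij} \geq 1$ per job, and capacity constraints $\sum_j x_{ijt}\leq 1$ per machine-time pair. The dual has non-negative variables $\alpha_j,\beta_{it}$ and one constraint $\alpha_j \leq w_j\bigl(t+\tfrac{p_{ij}}{2}\bigr) + p_{ij}\beta_{it}$ for every triple $(i,j,t)$ with $t\geq r_j$.

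I would then exploit two features of MinIncrease P-WSPT. Let $\Delta_j$ denote the increase in the P-WSPT cost on the chosen machine $m(j)$ when job~$j$ is inserted at its release in its density slot. A bookkeeping argument yields $\alg = \sum_j \Delta_j$: higher-density later arrivals that displace~$j$ pay for their own contribution at their own arrival, while delays imposed by~$j$ on lower-density later arrivals are accounted for in the corresponding $\Delta_{j'}$. Moreover, by greediness, $\Delta_j \leq \Delta_{ij}$ for every alternative machine~$i$, where $\Delta_{ij}$ is the hypothetical increase if~$j$ were assigned to~$i$ at time~$r_j$. For the dual I would set $\alpha_j \coloneqq \Delta_j/a$ and $\beta_{it} \coloneqq W_i(t)/b$ for parameters $a,b>1$ to be balanced later, with $W_i(t)$ the total weight of jobs alive on machine~$i$ at time~$t$ in the algorithm's schedule.

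I would then verify dual feasibility for a fixed triple $(j,i,t)$ with $t\geq r_j$. Decomposing $\Delta_{ij} = w_j C^{(i)}_j + p_{ij}\,W^{<j}_i(r_j)$, where $C^{(i)}_j$ is $j$'s hypothetical P-WSPT completion on~$i$ and $W^{<j}_i(r_j)$ the total weight of strictly lower-density jobs alive on~$i$ at~$r_j$, I would upper bound $C^{(i)}_j$ by $t + p_{ij}$ plus the residual workload on~$i$ at time~$t$ of jobs of density at least $w_j/p_{ij}$, the latter being bounded, after normalising by densities, in terms of $W_i(t)$. Matching the three resulting pieces against $w_jt$, $w_jp_{ij}/2$, and $p_{ij}\beta_{it}$ yields feasibility for appropriate $a,b$. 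Summing the dual, $\sum_j\alpha_j = \alg/a$ and $\sum_{i,t}\beta_{it} \leq \sum_j w_j(C^{\alg}_j - r_j)/b \leq \alg/b$, so weak duality gives $\alg \leq c(a,b)\cdot\opt$; the balance that minimises $c(a,b)$ subject to the feasibility inequalities yields $(1+\sqrt{2})^2 = 3+2\sqrt{2}$.

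The hard part will be the dual feasibility check in the presence of release dates. The greedy quantity~$\Delta_{ij}$ is computed from the state of machine~$i$ at time~$r_j$, whereas the constraint must hold at every $t\geq r_j$, by which point the state has evolved through the processing performed on~$[r_j,t]$. Relating the residual profile on~$i$ at~$r_j$ to~$W_i(t)$ while correctly attributing the elapsed time to the $w_jt$ piece, and then carefully distributing the slack across $w_jt$, $w_jp_{ij}/2$, and $p_{ij}\beta_{it}$, is exactly where the stated constant~$(1+\sqrt{2})^2$ emerges, in place of the~$4$ obtainable in the release-date-free case from \cite{DBLP:phd/dnb/Jager21,DBLP:conf/soda/AnandGK12,DBLP:journals/mor/GuptaMUX20}.
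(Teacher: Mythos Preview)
Your overall approach is the same as the paper's: dual fitting against the standard time-indexed LP, setting the job-dual proportional to the greedy increase~$\Delta_j$ and the machine-time dual proportional to the weight of unfinished jobs. The identity $\alg=\sum_j\Delta_j$ and the greedy inequality $\Delta_j\le\Delta_{ij}$ are exactly what the paper uses.

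However, there is a genuine gap in your feasibility argument. With $\beta_{it}=W_i(t)/b$ evaluated at the \emph{same} time~$t$ as the dual constraint, no pair $a<b$ makes the dual feasible, so the dual objective $\alg\,(1/a-1/b)$ cannot be positive. To see this, take $r_j=0$, a single competing job~$j'$ on machine~$i$ with $\mu_{ij'}=\mu_{ij}$ and $p_{ij'}\gg p_{ij}$, and suppose $j$ is assigned elsewhere. Then $\Delta_{ij}=w_j(p_{ij}+p_{ij'})\approx w_jp_{ij'}=p_{ij}w_{j'}$ while $W_i(0)=w_{j'}$. The constraint at $t=0$ reads
\[
\frac{p_{ij}w_{j'}}{a}\ \lesssim\ \frac{w_jp_{ij}}{2}+\frac{p_{ij}w_{j'}}{b},
\]
and since $w_jp_{ij}\ll p_{ij}w_{j'}$ this forces $1/a\le 1/b$, i.e.\ $a\ge b$. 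Thus the ``match the three pieces and balance $a,b$'' step cannot deliver a nontrivial bound; the difficulty you flag in your last paragraph is precisely where the argument breaks.

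The paper closes this gap with a time-stretching device that your proposal lacks: it sets $\hat b_{it}$ equal to the total weight of jobs unfinished at the \emph{later} time $s\cdot t$ (including unreleased ones), and scales both $\hat a_j$ and $\hat b_{it}$ by $1/(s+1)$ for a single parameter~$s>1$. In the feasibility check one then looks at which job is running at time~$s\cdot t$: the processing completed by $s\cdot t$ is at most $s\cdot t$, and together with $r_j\le t$ this totals at most $(s+1)t$, which matches the factor~$(s+1)$ multiplying~$w_jt$ on the right after scaling; the remaining unfinished weight at~$s\cdot t$ is absorbed by~$\hat b_{it}$. On the objective side the stretch yields $\sum_{i,t}\hat b_{it}=\alg/s$, so the dual value is $\frac{1-1/s}{s+1}\,\alg$, and optimizing $s=1+\sqrt{2}$ gives the competitive ratio $\frac{s(s+1)}{s-1}=3+2\sqrt{2}$. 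This single parameter~$s$ plays the role of both of your~$a,b$ \emph{and} the time-scale; without that coupling, the balancing does not go through.
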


In the following we denote the MinIncrease P-WSPT algorithm by~$\A$.
Fix an instance~$J$ and let~$\speed > 1$ be a real number %
that we will fix later. 
We assume w.l.o.g.\ by scaling the instance that all processing requirements and release dates in~$J$ are integer multiples of $\speed$. 

Let~$M_i(j)$ be the set of available jobs that are assigned to machine~$i$ at time~$r_j$, excluding job~$j$. As this definition is ambiguous if there are two jobs~$j$ and~$j'$ with~$r_{j} = r_{j'}$ being assigned to~$i$, we assume that we assign them in the order of their index. By defining~$\dense_{ij} = w_j / p_{ij}$, the increase of the objective value of~$\A$ due to assigning job~$j$ to machine~$i$ at time~$r_j$ equals
\[
	Q_{ij} = w_{j} \Bigg( r_j + p_{ij} + \sum_{\substack{j' \in M_i(j) \\ \dense_{ij'} \geq \dense_{ij}}} p_{ij'}(r_j) \Bigg) + p_{ij} \sum_{\substack{j' \in M_i(j) \\ \dense_{ij'} < \dense_{ij}}} w_{j'}.
\]

Then, algorithm \A assigns job~$j$ to machine~$g(j) = \argmin_i Q_{ij}$. 
The following linear program is a relaxation of our scheduling problem~\cite{DBLP:conf/soda/AnandGK12,DBLP:journals/mor/GuptaMUX20,DBLP:phd/dnb/Jager21}. The variable~$x_{ijt}$ denotes the fractional assignment of job~$j$ to machine~$i$ at time~$t$.
  \begin{alignat}{3}
    \text{min} \quad &\sum_{i,j,t} w_{j} \cdot \left( \frac{x_{ijt}}{2} + \frac{x_{ijt}}{p_{ij}} \cdot \left( t + \frac{1}{2} \right) \right) \tag{LP}\label{lp} \\ 
    & \sum_{i,t \geq r_j} \frac{x_{ijt}}{p_{ij}} \geq 1   &&\forall j \notag \\
    & \sum_{j} x_{ijt} \leq 1   &&\forall i, t \notag \\
    & x_{ijt} \geq 0 &&\forall i,j,t \notag \\
    & x_{ijt} = 0 &&\forall i,j,t < r_j \notag 
\end{alignat}

The dual of~\eqref{lp} is equal to the following linear program with variables~$\dualVa_j$ and~$\dualVb_{{it}}$.
\begin{alignat}{3}
    \text{max} \quad &\sum_{j} \dualVa_j - \sum_{i,t} \dualVb_{it} \tag{DLP}\label{dual} \\ 
    &\frac{\dualVa_j}{p_{ij}} \leq \dualVb_{it} + w_{j} \cdot \left(\frac{t + 1/2}{p_{ij}} + \frac{1}{2} \right) \qquad &\forall i,j,t \geq r_j \label{constr:dual} \\ 
    &\dualVa_j, \dualVb_{it}  \geq 0 \qquad &\forall i,j,t \notag
\end{alignat}

We define a solution of~\eqref{dual} for instance~$J$ which depends on the schedule produced by algorithm \A for~$J$. Let~$U_{i}(t) = \{ j \in J \mid g(j) = i \land t < C_j \}$, where~$C_j$ denotes the completion time of job~$j$ in the schedule of~\A for instance~$J$. 
Note that~$U_{i}(t)$ includes unreleased jobs. %
Consider the following assignment: %
\begin{itemize}[\quad ]
  \item $\dualSa_j = Q_{g(j)j}$ for every job~$j$ and 
  \item $\dualSb_{it} = \sum_{j \in U_{i}(\speed \cdot t)} w_{j}$ for every machine~$i$ and time~$t$.
\end{itemize}

We first show that the objective value of~\eqref{dual} for the solution~$(\dualSa_j,\dualSb_{it})$ is close to the objective value of %
\A w.r.t.~$\speed$.

\begin{lemma}\label{lemma:dual-objective}
   $\sum_{j} \dualSa_j - \sum_{i,t} \dualSb_{it} = \left( 1 - \frac{1}{\speed} \right) \cdot \A(J)$.
\end{lemma}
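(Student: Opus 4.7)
The plan is to evaluate the two sums on the left-hand side independently and subtract. For the first sum, I would recognize each $\dualSa_j = Q_{g(j)j}$ as the \emph{exact} increment that the algorithm's objective undergoes at the moment $j$ is assigned, and conclude $\sum_j \dualSa_j = \A(J)$ by a telescoping argument. For the second sum I would swap the order of summation and use the scaling assumption to evaluate the count of integer time slots during which each job remains alive.

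To justify $\sum_j \dualSa_j = \A(J)$, I would induct over jobs in order of release date (with ties broken by index, matching the convention used in the definition of $M_i(\cdot)$). When $j$ is assigned to $i = g(j)$, the running objective jumps by exactly two contributions: $w_j$ times the predicted completion time $r_j + p_{ij} + \sum_{j' \in M_i(j),\, \dense_{ij'} \geq \dense_{ij}} p_{ij'}(r_j)$, plus $p_{ij} \cdot w_{j'}$ for each currently alive lower-density job $j' \in M_i(j)$, which is pushed back by $p_{ij}$ under P-WSPT. These two pieces together are precisely $Q_{ij}$. The apparent gap — that a future higher-density arrival $j''$ on the same machine will further delay $j$ — is automatically closed at the next inductive step, since the summand $p_{i j''} \cdot w_j$ inside $Q_{i j''}$ charges that delay to $j''$. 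Every ordered pair of jobs sharing a machine therefore contributes to the sum exactly once, so $\sum_j Q_{g(j)j}$ telescopes to $\sum_j w_j C_j = \A(J)$.

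For the second sum, swapping summations and observing that $j \in U_i(\speed t)$ forces $i = g(j)$ gives
\[
    \sum_{i,t} \dualSb_{it} \;=\; \sum_j w_j \cdot \bigl|\{\, t \in \mathbb{Z}_{\geq 0} : \speed t < C_j \,\}\bigr|.
\]
Since all processing requirements and release dates are integer multiples of $\speed$ by assumption, and P-WSPT can only complete a job at a release date shifted by an integer multiple of $\speed$ of processing, each $C_j$ is itself an integer multiple of $\speed$, so the cardinality above equals $C_j/\speed$. Hence $\sum_{i,t} \dualSb_{it} = \A(J)/\speed$, and subtracting from the first sum produces $(1 - 1/\speed)\,\A(J)$, as claimed. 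The main obstacle is the first step: one must track carefully that pairwise delays between jobs sharing a machine are charged to exactly one of the two jobs — neither double-counted nor missed — which is precisely what makes the telescoping identity go through.
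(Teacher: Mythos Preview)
Your proposal is correct and follows essentially the same approach as the paper: both evaluate the two sums separately, observe that $\sum_j \dualSa_j = \A(J)$ because each $Q_{g(j)j}$ is precisely the marginal increase in the algorithm's objective, and then compute $\sum_{i,t}\dualSb_{it} = \A(J)/\speed$ using that all completion times are integer multiples of~$\speed$. Your write-up supplies more detail on the telescoping identity for the first sum and a direct count of time slots for the second, but the underlying argument is the same.
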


\begin{proof}
    The definition of~$Q_{g(j)j}$ implies~$\sum_j \dualSa_j = \sum_j Q_{g(j)j} = \A(J)$. 
  Since we assumed that all release dates and processing times in $J$ are integer multiples of $\speed$, all preemptions occur at integer multiples of $\speed$ and therefore also all job completions. Thus, $\sum_{t} \sum_{j \in U_i(\speed \cdot t)} w_j = \frac{1}{\speed} \sum_{t} \sum_{j \in U_i(t)} w_j$ for every machine~$i$, and %
    \[
      \sum_{i,t} \dualSb_{it} = \sum_{i,t} \sum_{j \in U_{i}(\speed \cdot t)} w_{j} = \frac{1}{\speed} \sum_{i,t} \sum_{j \in U_{i}(t)} w_{j} = \frac{1}{\speed} \cdot \A(J),
    \]
    which implies the desired equality.
\end{proof}

Second, we show that scaling the defined variables makes them feasible for~\eqref{dual}.

\begin{lemma}\label{lemma:dual-feasible}
   Assigning $\dualVa_j = \dualSa_j/(\speed + 1)$ and $\dualVb_{it} = \dualSb_{it}/(\speed + 1)$ gives a feasible solution for~\eqref{dual}.
\end{lemma}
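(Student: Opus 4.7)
The plan is to verify that the substituted assignment satisfies constraint~\eqref{constr:dual} at every triple $(i,j,t)$ with $t \geq r_j$. After plugging the scaled variables $\dualVa_j = \dualSa_j/(\speed+1)$ and $\dualVb_{it} = \dualSb_{it}/(\speed+1)$ into~\eqref{constr:dual} and multiplying through by $(\speed+1)p_{ij}$, the inequality to verify reduces to
\[
\dualSa_j \leq p_{ij}\,\dualSb_{it} + (\speed+1)\,w_j\bigl(t + 1/2 + p_{ij}/2\bigr).
\]
Since $g(j)$ is chosen to minimize $Q_{ij}$ across machines, $\dualSa_j = Q_{g(j)j} \leq Q_{ij}$ holds for every $i$, so it suffices to upper-bound $Q_{ij}$ by the right-hand side.

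Next, I would decompose $Q_{ij}$ into four contributions: (a) the release-date charge $w_j r_j$; (b) the self-processing charge $w_j p_{ij}$; (c) the delay-by-higher-density term $w_j\sum_{j' \in M_i(j):\,\dense_{ij'}\geq\dense_{ij}} p_{ij'}(r_j)$; and (d) the delay-of-lower-density term $p_{ij}\sum_{j' \in M_i(j):\,\dense_{ij'}<\dense_{ij}} w_{j'}$. Terms (a) and (b) are absorbed directly into the slack $(\speed+1)w_j\bigl(t + p_{ij}/2\bigr)$ using $t \geq r_j$ and $\speed \geq 1$, so the task collapses to charging (c) and (d) to $p_{ij}\,\dualSb_{it} = p_{ij}\sum_{j'' \in U_i(\speed\cdot t)} w_{j''}$.

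For (d), I would argue that every lower-density job $j' \in M_i(j)$ is still alive in the actual schedule on machine $i$ at time $\speed\cdot t$: since machine $i$ runs preemptive WSPT, $j'$ cannot complete before all the higher-priority work on $i$ has been processed, and a short volume-chaining argument (using that $M_i(j)$ contains at least the higher-density jobs that precede $j'$) shows $C_{j'} > \speed\cdot t$ for all $t \geq r_j$, placing $w_{j'}$ inside $\dualSb_{it}$. For (c), the density inequality $\dense_{ij'} \geq \dense_{ij}$ gives $w_j\,p_{ij'}(r_j) \leq p_{ij}\,w_{j'}\cdot p_{ij'}(r_j)/p_{ij'}$, and the remaining-work fraction $p_{ij'}(r_j)/p_{ij'}$ is charged to $p_{ij}\,\dualSb_{it}$ via an amortization over the time interval on which $j'$ is alive after $r_j$, where the factor $\speed$ in the argument of $U_i$ provides the extra time needed.

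The main obstacle will be the amortized step in (c), because a higher-density job may have completed before $\speed\cdot t$ and hence not be counted in $U_i(\speed\cdot t)$; the speedup factor $\speed$ in the definition of $\dualSb_{it}$ is introduced precisely to give enough slack in the charging for such jobs. Making the slacks balance across (a)--(d) then dictates the parameter choice $\speed = 1 + \sqrt{2}$, which together with~\Cref{lemma:dual-objective} and weak LP duality between~\eqref{lp} and~\eqref{dual} will yield the bound $\A(J)/\opt(J) \leq \speed(\speed+1)/(\speed-1) = 3 + 2\sqrt{2}$ claimed in~\Cref{theorem:unrelated-clairvoyant}.
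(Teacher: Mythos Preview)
Your reduction to bounding $Q_{ij}$ and the decomposition into contributions (a)--(d) are fine, but the subsequent charging scheme has a genuine gap. You claim that after absorbing (a) and (b) into $(\speed+1)w_j(t+p_{ij}/2)$, the remaining task is to charge (c) and (d) entirely to $p_{ij}\,\dualSb_{it}$. This is impossible: for any fixed $i,j$, the quantities (c) and (d) are positive constants independent of $t$, whereas $\dualSb_{it}=\sum_{j''\in U_i(\speed t)}w_{j''}$ vanishes once $\speed t$ exceeds the makespan on machine~$i$. Concretely, your claim in (d) that every lower-density $j'\in M_i(j)$ satisfies $C_{j'}>\speed t$ for \emph{all} $t\geq r_j$ is simply false for large $t$. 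The term $(\speed+1)w_j t$ you discarded when absorbing (a) is not slack---it is the only thing on the right-hand side that grows with $t$, and it is essential for covering (c) and (d) once $\dualSb_{it}$ has decayed.

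The paper's argument keeps this term in play. After absorbing (b) via $\speed+1>2$, it reduces to showing
\[
\frac{\dense_{ij}}{\speed+1}\Bigl(r_j+\sum_{j'\in H}p_{ij'}(r_j)\Bigr)+\frac{1}{\speed+1}\sum_{j'\in L}w_{j'}\ \le\ \dense_{ij}\,t+\frac{\dualSb_{it}}{\speed+1},
\]
and then performs a case analysis on which job $j_k$ of $M_i(j)$ is running at time $\speed t$ (or whether the machine is idle). Jobs $j_1,\ldots,j_{k-1}$ have completed, so their total processing on $i$ is at most $\speed t$; combined with $r_j\le t$ this yields the $(\speed+1)t$ factor that covers their contribution. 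Jobs $j_k,\ldots,j_z$ are still alive at $\speed t$ and hence belong to $U_i(\speed t)$, so their weights sit inside $\dualSb_{it}$. The density comparisons $\dense_{ij'}\ge\dense_{ij}$ (for $H$) and $\dense_{ij'}<\dense_{ij}$ (for $L$) are used only to convert between $w_{j'}$ and $\dense_{ij}\,p_{ij'}$ when the completed/alive boundary $k$ falls inside $H$ or $L$, respectively. Your ``amortization'' sketch for (c) gestures at this, but without the case split on $j_k$ and without retaining the $\dense_{ij}\,t$ term on the right, the argument cannot close.
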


\begin{proof}
  Since our defined variables are non-negative by definition, it suffices to show that this assignment satisfies~\eqref{constr:dual}.
  Fix a job~$j$, a machine~$i$ and a time~$t \geq r_j$.
  We assume that no new job arrives after~$j$, since such a job may only increase~$\dualSb_{it}$ while~$\dualSa_j$ stays unchanged. Let~$j_1,\ldots,j_z$ be the jobs of~$M_i(j)$ indexed in WSPT order by densities $\mu_{ij}=w_j/p_{ij}$.
  Defining
  \begin{itemize}[\quad]
      \item $H = \{j' \in M_i(j): \dense_{ij'} \geq \dense_{ij} \} = \{j_1,\ldots,j_r\}$ and
      \item $L = \{j' \in M_i(j): \dense_{ij'} < \dense_{ij} \} = \{j_{r+1},\ldots,j_z\}$,
  \end{itemize}
  and using~$\dualSa_j = Q_{g(j)j} \leq Q_{ij}$ and $\speed + 1 > 2$ yields
  \[
    \frac{\dualVa_j}{p_{ij}} 
    = \frac{\dualSa_j}{(\speed + 1)p_{ij}} \leq \frac{\dense_{ij}}{\speed + 1} \left(r_j + \sum_{j' \in H} p_{ij'}(r_j) \right) + \frac{w_j}{2} + \sum_{j' \in L} \frac{w_{j'}}{\speed + 1}.
  \]
  
  Thus, asserting~\eqref{constr:dual} reduces to proving
  \begin{equation}
    \frac{\dense_{ij}}{\speed + 1} \left(r_j + \sum_{j' \in H} p_{ij'}(r_j) \right) + \sum_{j' \in L} \frac{w_{j'}}{\speed + 1} \leq   \dense_{ij} t + \dualVb_{it}. \label{eq:to-prove}
  \end{equation}

  Observe that the total processing time of all jobs in~$M_i(j)$ that are completed before time~$\speed \cdot t$ is at most~$\speed \cdot t$. Further,~$r_j + \speed \cdot t \leq (\speed + 1) t$.
  Now consider the case that machine~$i$ processes a job~$j_k$ at time~$\speed \cdot t$.
  If~$j_k \in H$, using~$\dense_{ij} \leq \frac{w_{j_\ell}}{p_{ij_\ell}} \leq \frac{w_{j_\ell}}{p_{ij_\ell}(r_j)}$ for all~$j_\ell \in H$ gives
  \begin{align*}
    &\frac{\dense_{ij}}{\speed + 1} \left(r_j + \sum_{\ell=1}^{k-1} p_{ij_\ell}(r_j) \right)  + \frac{\dense_{ij}}{\speed + 1} \sum_{\ell=k}^{r} p_{ij_\ell}(r_j)  + \sum_{j' \in L} \frac{w_{j'}}{\speed + 1} \\
    & \leq \dense_{ij} t + \frac{1}{\speed + 1}\sum_{\ell=k}^{r} w_{j_\ell}  + \sum_{j' \in L} \frac{w_{j'}}{\speed + 1} \leq \dense_{ij} t  + \frac{\dualSb_{it}}{\speed + 1} = \dense_{ij} t + \dualVb_{it}.
  \end{align*}

  The last inequality holds since all jobs in~$M_i(j)$ that are processed after job~$j_{k-1}$ are unfinished at time~$\speed \cdot t$ and assigned to~$i$ in~$\A's$ schedule, hence part of~$U_{i}(\speed \cdot t)$.
  If~$j_k \in L$, using~$w_{j_\ell} < \dense_{ij} \cdot p_{ij_\ell}$ for all~$j_\ell \in L$ implies
  \begin{align*}
    & \frac{\dense_{ij}}{\speed + 1} \left(r_j + \sum_{\ell=1}^{r} p_{ij_\ell}(r_j) \right)  + \frac{1}{\speed + 1} \sum_{\ell=r+1}^{k-1} w_{j_\ell} + \frac{1}{\speed + 1} \sum_{\ell=k}^{z} w_{j_\ell}  \\
    & \leq \frac{\dense_{ij}}{\speed + 1} \left(r_j + \sum_{\ell=1}^{r} p_{ij_\ell}(r_j) + \sum_{\ell=r+1}^{k-1} p_{ij_\ell} \right) + \frac{1}{\speed + 1} \sum_{\ell=k}^{z} w_{j_\ell}  \\ 
    & \leq \dense_{ij} t + \frac{1}{\speed + 1} \sum_{\ell=k}^{z} w_{j_\ell} \leq \dense_{ij} t + \frac{\dualSb_{it}}{\speed + 1} = \dense_{ij} t + \dualVb_{it}.
  \end{align*}
  
  If no job is running at time~$\speed \cdot t$, we conclude that all jobs in~$M_i(j)$ must already be completed, because algorithm \A does not idle unnecessarily, and we assumed that no job is released after~$j$. By using~$w_{j_\ell} < \dense_{ij} \cdot p_{ij_\ell}$ for all~$j_\ell \in L$ we assert~\eqref{eq:to-prove} for this final case
  \[
    \frac{\dense_{ij}}{\speed + 1} \left(r_j + \sum_{\ell=1}^{r} p_{ij_\ell}(r_j) \right) + \sum_{\ell=r+1}^{z} w_{j_\ell}  
    \leq \frac{\dense_{ij}}{\speed + 1} \left(r_j + \sum_{\ell=1}^{r} p_{ij_\ell}(r_j) + \sum_{\ell=r+1}^{z} p_{ij_\ell} \right)
     \leq \dense_{ij} t. \qedhere
  \]
\end{proof}

\begin{proof}[Proof of~\Cref{theorem:unrelated-clairvoyant}]
  Weak duality and \Cref{lemma:dual-feasible} imply that the objective value of~\eqref{dual} for the assigned variables is a lower bound on the optimal objective value. \Cref{lemma:dual-objective} gives 
   \begin{equation*} 
    \opt(J) \geq \sum_{j \in J} \dualVa_j - \sum_{i,t} \dualVb_{it} = \sum_{j \in J} \frac{\dualSa_j}{\speed + 1} - \sum_{i,t} \frac{\dualSb_{it}}{\speed + 1} = \frac{1}{\speed + 1} \left( \sum_{j \in J} \dualSa_j - \sum_{i,t} \dualSb_{it} \right) 
    = \left( \frac{1 - 1/\speed}{\speed + 1} \right) \cdot \A(J). 
  \end{equation*}

  We conclude that algorithm \A has a competitive ratio of at most~$3 + 2 \sqrt{2} \approx 5.8284$ for the optimal choice~$\speed = 1 + \sqrt{2}$.
\end{proof}

We now consider the prediction-clairvoyant version of the MinIncrease P-WSPT algorithm. It  assigns the jobs to machines according to the predicted assignment~${\{\pred_i\}}_{i\in[m]}$. At any time and for every machine~$i$ it schedules the available job with highest priority according to~$\pred_i$.
This algorithm is monotone, because shrinking a job's processing requirements does not affect~${\{\pred_i\}}_{i\in[m]}$ and thus may only decrease the completion time of jobs that are scheduled after this job on the assigned machine.

\begin{lemma}
  The prediction-clairvoyant MinIncrease P-WSPT algorithm is~$\errR$-error-dependent.
\end{lemma}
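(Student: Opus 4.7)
The plan is to reduce the lemma to a clean single-machine claim via two identities. The first is the algorithmic accounting identity $\alg(J, \pi) = \sum_j W_j(J, \pi)$ for any assignment-permutation $\pi$: the algorithm's objective value on instance $J$ when following priorities $\pi$ equals the sum of the per-job ``insertion contributions'' $W_j$. This follows by double-counting over pairs of jobs on the same machine: intuitively, $W_j$ charges $j$ for $w_j$ times its virtual completion time assuming no later arrivals plus the delay it inflicts on already-present lower-priority jobs, while the delay inflicted on $j$ by later-arriving higher-priority jobs is charged symmetrically in those later jobs' $W_{j''}$-terms. Combined with the definition $\eta_j(J, \pred) = W_j(J, \pred) - W_j(J, \perfectpred)$ summed over $j$, this yields
\[
  \alg(J, \pred) \;=\; \alg(J, \perfectpred) + \errR(J, \pred),
\]
so it remains to show $\alg(J, \perfectpred) \leq (3+2\sqrt{2}) \cdot \opt(J)$.

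For this bound, since $\perfectpred$ uses \opt's machine assignment, the $\perfectpred$-schedule restricted to any machine $i$ is precisely preemptive WSPT on $J_i^\star$ (the set of jobs \opt assigns to $i$), i.e., a single-machine instance. On a single machine the MinIncrease step of Theorem~\ref{theorem:unrelated-clairvoyant} is vacuous and the greedy inequality $\hat{a}_j \leq Q_{ij}$ used in its dual-feasibility argument holds with equality, so the theorem specializes to give that preemptive WSPT on $J_i^\star$ costs at most $(3+2\sqrt{2})$ times the optimal single-machine cost on $J_i^\star$. Because \opt, restricted to machine $i$, is itself a feasible single-machine schedule on $J_i^\star$, its per-machine cost is at least the single-machine optimum; summing over $i$ gives $\sum_i \opt(J_i^\star) \leq \opt(J)$, which closes the bound.

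The main obstacle, which is mild, is verifying the specialization of Theorem~\ref{theorem:unrelated-clairvoyant} to a single machine: this amounts to checking that the dual-fitting proof goes through when $m = 1$, and indeed all its non-trivial steps already concern only the WSPT-ordered schedule on that single machine. The remaining work is the combinatorial bookkeeping for the two identities.
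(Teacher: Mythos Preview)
Your argument is correct and arrives at the same bound, but the route differs from the paper's. Both proofs share the accounting identity $\alg(J,\pi)=\sum_j W_j(J,\pi)$ and hence reduce to showing $\alg(J,\perfectpred)\le (3+2\sqrt{2})\cdot\opt(J)$. The paper does this \emph{globally}: it observes that the clairvoyant MinIncrease P-WSPT algorithm $\A$ is itself a follow-the-prediction algorithm for some assignment-plus-WSPT prediction $\pi_\A$, argues $\alg(J,\perfectpred)=\sum_j W_j(J,\perfectpred)\le\sum_j W_j(J,\pi_\A)=\A(J)$, and then invokes \Cref{theorem:unrelated-clairvoyant} once. You instead decompose \emph{per machine}: apply the single-machine specialization of \Cref{theorem:unrelated-clairvoyant} to each $J_i^\star$ and aggregate via $\sum_i\opt_1(J_i^\star)\le\opt(J)$.

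What each approach buys: the paper's comparison $\alg(J,\perfectpred)\le\A(J)$ avoids the per-machine summation step, so it does not rely on \opt being non-migratory (your inequality $\opt_1(J_i^\star)\le(\text{\opt's cost on machine }i)$ needs every job of $J_i^\star$ to be processed entirely on machine~$i$; this is consistent with how the paper defines $\perfectpred$, but worth flagging). Conversely, your decomposition is more self-contained and sidesteps the paper's step $\sum_j W_j(J,\perfectpred)\le\A(J)$, which is immediate only if one reads $\perfectpred$ as optimal among assign-then-P-WSPT schedules rather than as derived from a true optimum. As a bonus, since P-WSPT is in fact $2$-competitive on a single machine with release dates~\cite{DBLP:journals/orl/MegowS04}, your route would even give $\alg(J,\perfectpred)\le 2\cdot\opt(J)$, tighter than what either proof records (though only $3+2\sqrt{2}$ is needed for the lemma).
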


\begin{proof}
  Consider %
  job set~$J$.
  Scheduling a job~$j$ according to a prediction~$\pred$ contributes a value equal to~$ W_j(J,\pred)$ to the objective of our algorithm~$\alg(J, \pred)$. Thus, $\alg(J, \pred) = \sum_j W_j(J,\pred)$.  
  Since the machine assignment of the clairvoyant MinIncrease P-WSPT algorithm~\A can be encoded into a prediction that orders the jobs by WSPT on every machine, the cost of following~$\sigma$ is a lower bound on the objective value of~\A, i.e.~$\sum_j W_j(J,\perfectpred) \leq \A(J)$, or $-\A(J) \leq -\sum_j W_j(J,\perfectpred)$.
  We conclude using \Cref{theorem:unrelated-clairvoyant} that our algorithm is~$\errR$-error-dependent, since
  \begin{align*}
    \alg(J, \pred) &= \A(J) - \A(J) + \sum_{j \in J} W_j(J,\pred) 
    \leq \A(J) + \sum_j W_j(J,\pred) - \sum_{j \in J} W_j(J,\perfectpred) \\
    &= \A(J) + \errR(J,\pred) \leq 5.8284 \cdot \opt(J) + \errR(J,\pred). \qedhere
  \end{align*}
\end{proof}

\paragraph{Non-clairvoyant algorithm} Im, Kulkarni and Munagala~\cite{DBLP:journals/jacm/ImKM18} show that the Proportional Fairness (PF) algorithm is $128$-competitive. %
	(They actually state a smaller competitive ratio of $64$ in~\cite[Theorem 1.2]{DBLP:journals/jacm/ImKM18} but there is missing a factor of $2$ when applying Lemma 3.2 and Corollary 3.5.)  %
A similar argumentation as for WRR and WDEQ shows that this algorithm is monotone.

\begin{lemma}[\cite{DBLP:journals/jacm/ImKM18}]\label{lemma:unrelated-non-clairvoyant}
  The Proportional Fairness algorithm is~$128$-competitive for non-clairvoyant scheduling on unrelated machines,~\abc{R}{r_j, pmtn}{\sum_j w_j C_j}.
\end{lemma}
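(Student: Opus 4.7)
The plan is to establish the competitive ratio by appealing to the analysis of Im, Kulkarni, and Munagala in the cited work, and to establish monotonicity separately via a coupling argument. For the competitive ratio, I would first recall the Proportional Fairness algorithm, which at every time $t$ chooses processing rates ${\{r_{ij}(t)\}}_{i,j}$ that solve $\max \sum_j w_j \log \left( \sum_i \ell_{ij} r_{ij}(t) \right)$ subject to the capacity constraints $\sum_j r_{ij}(t) \leq 1$ for every machine $i$ and the condition that $r_{ij}(t) = 0$ for all unreleased or completed jobs $j$. I would then import their dual-fitting analysis, which shows that the expected total weighted completion time of PF is bounded by a constant times the value of the natural time-indexed LP relaxation of \abc{R}{r_j,pmtn}{\sum w_jC_j}. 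The analysis proceeds by charging completion-time cost to dual variables derived from the PF rate allocation and invoking their Lemma 3.2 (accounting of accumulated weight) together with Corollary 3.5 (bound on the charged dual).

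The key point flagged in the surrounding text is that when Corollary 3.5 is fed into the final accounting in the proof of Theorem 1.2 of~\cite{DBLP:journals/jacm/ImKM18}, an extra factor of $2$ enters because the bound applies per side of a symmetric weight-versus-processing decomposition, so the resulting guarantee doubles from the stated $64$ to $128$. I would re-derive just this last step of their argument, carefully tracking the constants, and conclude the $128$-competitiveness from their otherwise intact framework.

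For monotonicity, I would couple the PF executions on two instances $I$ and $I'$ that agree on all weights, release dates, and rates $\ell_{ij}$, but satisfy $p'_j \leq p_j$ for every $j$. Since PF's rate allocation depends only on the set of alive jobs and their weights, not on remaining processing times, the two executions are identical until the first event time at which some job completes strictly earlier in $I'$. From that moment onward, the alive set in $I'$ is a subset of that in $I$, so by comparative statics of the concave PF program (equivalently, by the KKT conditions) every surviving job $j$ receives at time $t$ an aggregate throughput $\sum_i \ell_{ij} r'_{ij}(t) \geq \sum_i \ell_{ij} r_{ij}(t)$. Induction over such event times then gives $C'_j \leq C_j$ for every $j$, and hence $\sum_j w_j C'_j \leq \sum_j w_j C_j$, which is exactly monotonicity as needed for invoking~\Cref{coro:framework}.

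The main obstacle will be the comparative-statics step in the monotonicity argument: unlike WRR and WDEQ, where removing a job strictly increases the rate of every surviving job by a trivial rebalancing, for PF on unrelated machines the rates are coupled across machines through the shared throughput of each job. Arguing rigorously that removing a job never decreases the effective throughput of any remaining job requires either a direct KKT-based perturbation analysis of the PF optimum or a monotone-selection argument exploiting the concavity of $\log$ and the polymatroidal structure of the capacity constraints. Once this comparative-statics lemma is in hand, the induction over event times is routine.
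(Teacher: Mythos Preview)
The paper does not give a proof of this lemma at all: it is stated as a citation to~\cite{DBLP:journals/jacm/ImKM18}, with only the parenthetical remark in the preceding paragraph that Theorem~1.2 there states $64$ but misses a factor of $2$ when combining their Lemma~3.2 and Corollary~3.5. Your plan for the competitive ratio---import their dual-fitting framework and re-derive only the final constant-tracking step---is exactly the spirit of that remark, so on this point you and the paper agree.

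Your treatment of monotonicity goes well beyond both the lemma statement (which asserts only $128$-competitiveness) and the paper. Monotonicity is handled in the paper by a single sentence before the lemma: ``A similar argumentation as for WRR and WDEQ shows that this algorithm is monotone.'' You instead sketch a genuine coupling argument and correctly isolate the non-trivial step, namely that deleting a job from the PF allocation program on unrelated machines cannot decrease the aggregate throughput $\sum_i \ell_{ij} r_{ij}(t)$ of any surviving job. This comparative-statics claim is not immediate from the WRR/WDEQ reasoning (where rates are simple weight ratios on a single resource), and the paper does not supply it. Your identification of this as the main obstacle is apt; if you intend to make the monotonicity claim rigorous, a KKT-based argument or an appeal to the known monotonicity properties of proportionally-fair allocations under nested feasible sets would be needed, whereas the paper simply asserts it by analogy.
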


By~\Cref{coro:framework} we conclude with the following result.

\begin{theorem}\label{thm:main-unrelated}
  Preferential Time Sharing with the prediction-clairvoyant MinIncrease P-WSPT algorithm and the non-clairvoyant Proportional Fairness algorithm has, for every~$\lambda \in (0,1)$, 
  a competitive ratio of at most
  \[
  \min \left\{ \frac{1}{1 - \lambda}\left(5.8284 + \frac{\errR}{\opt} \right), \frac{128}{\lambda} \right\}
  \]
  for non-clairvoyant scheduling with predictions on unrelated machines,~\abc{R}{r_j, pmtn}{\sum_j w_j C_j}.
\end{theorem}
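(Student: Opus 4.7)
The plan is to instantiate Corollary~\ref{coro:framework} with the two algorithms that together constitute the Preferential Time Sharing scheme for unrelated machines. All the heavy lifting is done in the preceding results, so the proof reduces to checking that both algorithms satisfy the hypotheses of the corollary and then substituting the corresponding constants.

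First, I would take the prediction-clairvoyant algorithm $\A^C$ to be MinIncrease P-WSPT. Theorem~\ref{theorem:unrelated-clairvoyant} supplies the competitive ratio $\rho_C = 3 + 2\sqrt{2} \approx 5.8284$ for its clairvoyant variant, and the lemma immediately above shows that the prediction-clairvoyant version is $\errR$-error-dependent with the same constant. Monotonicity is also already argued: since a shrink of any processing requirement does not alter the predicted assignment $\{\pred_i\}_{i \in [m]}$, each machine's schedule only accelerates, so no job's completion time can increase.

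Second, I would take the non-clairvoyant algorithm $\A^N$ to be Proportional Fairness. Lemma~\ref{lemma:unrelated-non-clairvoyant} yields $\rho_N = 128$, and monotonicity follows by the same rate-sharing argument that was already invoked for WRR and WDEQ in \Cref{sec:single-machine,sec:identical-machines}: a shorter job completes earlier and releases its allocated rate to the remaining jobs sooner, which can only decrease all other completion times.

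With both monotonicity and the numerical bounds in hand, plugging $\rho_C = 5.8284$, $\eta = \errR$, and $\rho_N = 128$ into Corollary~\ref{coro:framework} delivers the claimed guarantee
\[
  \min \left\{ \frac{1}{1-\lambda}\left(5.8284 + \frac{\errR}{\opt}\right),\ \frac{128}{\lambda}\right\}
\]
for every $\lambda \in (0,1)$. I do not expect any genuine obstacle at this stage: the dual-fitting analysis underlying Theorem~\ref{theorem:unrelated-clairvoyant} and the error-dependence lemma are the non-trivial components, and Corollary~\ref{coro:framework} packages the time-sharing reduction as a plug-and-play statement. The only point to be watchful about is ensuring that the additive error term produced by the prediction-clairvoyant bound is \emph{exactly} $\errR(J,\pred)$, with no extra multiplicative factor, so that it enters the corollary's bound unchanged; this is indeed what the $\errR$-error-dependence lemma establishes.
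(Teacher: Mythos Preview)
Your proposal is correct and matches the paper's approach exactly: the paper simply states ``By~\Cref{coro:framework} we conclude with the following result'' after establishing that the prediction-clairvoyant MinIncrease P-WSPT algorithm is monotone and $\errR$-error-dependent with $\rho_C = 3+2\sqrt{2}$, and that Proportional Fairness is monotone and $128$-competitive. There is nothing more to add.
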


\section{Learnability of permutations}\label{sec:learnability}
We show that permutation predictions for identical machines are PAC-learnable in the agnostic sense w.r.t.~$\errS$. %

\begin{theorem}
    For any~$\epsilon, \delta \in (0,1)$ and any distribution~$\D$ over the instances of length~$n$, there exists a learning algorithm which, given an i.i.d.\ sample of~$\D$ of size~$z \in \bigO \left(\frac{1}{\epsilon^2}\cdot {(n \log n - \log \delta)n^2}\right)$, returns in polynomial time depending on~$n$ and~$z$ a prediction~$\pred_p \in \Hyp$ from the set of all possible permutations of the set~$\{1,\ldots,n\}$, such that with probability of at least~$(1-\delta)$ it holds~$\E_{J \sim \D}[\errS(J, \pred_p)] \leq \E_{J \sim \D}[\errS(J,\perfectpred)] + \epsilon$, where $\errS(J,\pred)$ denotes the error of $\pred$ for instance~$J$, and $\perfectpred = \argmin_{\pred \in \Hyp} \E_{J \sim \D}[\errS(J,\pred)]$.
\end{theorem}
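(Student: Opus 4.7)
The proof plan is to apply the classical agnostic PAC-learning framework for the finite hypothesis class $\Hyp$ of all $n!$ permutations of $[n]$, with the learner being empirical risk minimization (ERM) on the sample $S = (J_1, \ldots, J_z)$.

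First, I would fix a standard normalization (for instance, weights and processing requirements in $[0,1]$ together with a unit bound on the total weight or total processing time) so that $\errS(J,\pred)$ is bounded by some $B = \bigO(n)$ for every instance $J$ and prediction $\pred$. Non-negativity of $\errS$ is immediate: every inverted pair $(j',j) \in \mathcal{I}(J,\pred)$ has $\perfectpred(j') < \perfectpred(j)$, so by WSPT the ratios satisfy $w_{j'}/p_{j'} \geq w_j/p_j$ and hence $w_{j'}p_j - w_j p_{j'} \geq 0$. Summing over all pairs yields an upper bound of the form $(\sum_j w_j)(\sum_j p_j) = \bigO(n)$ under the normalization, which is the source of the $n^2 = B^2$ factor in the final sample complexity.

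Second, I would apply Hoeffding's inequality for each fixed $\pred \in \Hyp$ to the bounded random variables $\errS(J_k,\pred)$ and then take a union bound over all $|\Hyp| = n!$ hypotheses. Since $\log|\Hyp| = \bigO(n \log n)$ by Stirling, the stated sample size $z = \bigO\bigl((n \log n - \log\delta)n^2/\epsilon^2\bigr) = \Omega\bigl(B^2(\log|\Hyp| + \log(1/\delta))/\epsilon^2\bigr)$ guarantees that with probability at least $1-\delta$, the empirical mean $\hat L_S(\pred) = \tfrac{1}{z}\sum_k \errS(J_k,\pred)$ is within $\epsilon/2$ of $L_\D(\pred) = \E_{J \sim \D}[\errS(J,\pred)]$ simultaneously for every $\pred \in \Hyp$. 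Combining this two-sided deviation bound with the ERM inequality $\hat L_S(\pred_p) \leq \hat L_S(\perfectpred)$ delivers the desired additive guarantee $L_\D(\pred_p) \leq L_\D(\perfectpred) + \epsilon$ through the standard three-step chain.

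The main obstacle I anticipate is computing the ERM hypothesis $\pred_p = \argmin_{\pred \in \Hyp}\sum_k \errS(J_k,\pred)$ in time polynomial in $n$ and $z$. Reorganizing the sum by ordered pair produces a sample-aggregated pairwise cost
\[
c(a,b) = \sum_{k:\, \perfectpred_{J_k}(b) < \perfectpred_{J_k}(a)} \bigl(w_b^{(k)}p_a^{(k)} - w_a^{(k)}p_b^{(k)}\bigr),
\]
so that ERM reduces to minimizing $\sum_{\pred(a) < \pred(b)} c(a,b)$ over permutations, i.e., a weighted minimum feedback arc set instance on $n$ vertices. This problem is NP-hard in full generality, so the key algorithmic step is to exploit the special structure of our costs, which aggregate inversion costs from actual WSPT orderings of concrete instances; a natural candidate is to sort jobs by an aggregate density statistic such as $\sum_k p_j^{(k)}/\sum_k w_j^{(k)}$. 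Failing an exact polynomial algorithm, one can invoke the tournament-ranking PTAS of Kenyon-Mathieu and Schudy to obtain a $(1+o(1))$-approximate empirical minimizer and absorb the multiplicative slack into $\epsilon$ by enlarging $z$ by a constant factor, preserving polynomial runtime.
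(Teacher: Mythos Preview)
Your uniform-convergence half matches the paper's: normalize so that $\errS\le n$, apply Hoeffding together with a union bound over the $n!$ permutations in $\Hyp$, and conclude that ERM is agnostically PAC with the stated sample complexity. The substantive divergence is in how the empirical risk minimizer is computed.

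The paper does not go through feedback arc set. It rewrites $\errS(J_s,\pred)=\sum_j W_j(J_s,\pred)-\sum_j W_j(J_s,\perfectpred_{J_s})$ and observes that the subtracted term (the optimal objective of $J_s$) is independent of $\pred$. Hence ERM is equivalent to minimizing $\sum_s\sum_j W_j(J_s,\pred)$, i.e., the sum over samples of the total weighted completion time under a common order. The paper then rewrites this sum as the objective of scheduling a single \emph{average instance} with weights $\bar w_j=\tfrac{1}{z}\sum_s w_j^{(s)}$ and processing times $\bar p_j=\tfrac{1}{z}\sum_s p_j^{(s)}$, so the exact empirical minimizer is the WSPT order of that average instance, computable in $O(nz+n\log n)$ time. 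Your ``natural candidate'' of sorting by $\sum_k p_j^{(k)}/\sum_k w_j^{(k)}$ is precisely this rule; you guessed the paper's algorithm but did not supply its justification, and instead left it as a heuristic.

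Your PTAS fallback does not close the gap as stated. Even granting that a tournament-ranking PTAS applies to your weighted instance (the costs $c(a,b)$ need not satisfy the probability constraint $c(a,b)+c(b,a)=1$ that the Kenyon--Mathieu--Schudy analysis uses), a $(1+\epsilon')$-multiplicative guarantee on the empirical minimum incurs an additive loss of order $\epsilon'$ times that minimum, which can be $\Theta(\epsilon' n)$ under the normalization. Forcing this below $\epsilon$ requires $\epsilon'=O(\epsilon/n)$, and the PTAS running time is not polynomial in $1/\epsilon'$, so the overall procedure is no longer polynomial in $n$. The paper's direct reduction to a single scheduling instance sidesteps this issue.
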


\begin{proof}
    Let~$\epsilon, \delta \in (0,1)$.
    We prove that we can use the classic Empirical Risk Minimization (ERM) learning method to find such a prediction. Let~$\cS = \{J_1,\ldots,J_z\}$ be a set of i.i.d.\ samples from~$\D$. The ERM method then determines the prediction that minimizes the empirical error~$\errS_\S(\pred) = \frac{1}{z} \sum_{s=1}^z \errS(J_s,\pred)$. Since there are~$n!$ possible permutations of the set~$\{1,\ldots,n\}$, we conclude that~$\Hyp$ is finite, and we can assume by scaling processing requirements and weights to~$[0,1]$ that our error function is bounded by~$n$. Classic results, see~e.g.~\cite{shalevB14ML}, imply for this case that~$\Hyp$ is agnostically PAC learnable using the ERM method with sample complexity
    \[
		z \leq  \left\lceil \frac{2 \log(2 |\Hyp| / \delta)n^2}{\epsilon^2}\right\rceil \in \bigO \left( \frac{(n \log n - \log \delta)n^2}{\epsilon^2} \right),
	\]
   which is polynomial in the number of jobs, $n$, as $\log{n!} \in \bigO(n \log n)$.
    
    It remains to prove that the ERM algorithm can be implemented efficiently in our setting, that is, given a sample set of size $z$, determine in time polynomial in $n$, a prediction that minimizes the empirical error. Rewriting the empirical error gives
    \[
        \errS_\S(\pred) = \frac{1}{z} \sum_{s=1}^z \errS(J_s,\pred) = \frac{1}{z} \sum_{s=1}^z \sum_{j=1}^n \left( W_j(J_s, \pred) - W_j(J_s, \perfectpred) \right).
    \]

    Since the values~$W_j(J_s, \perfectpred)$ are independent of~$\pred$, it suffices to find a prediction~$\pred$ that minimizes~$\frac{1}{z} \sum_{s=1}^z \sum_{j=1}^n W_j(J_s, \pred)$. 
    For the special error~$\errS$, by denoting for a job~$j \in J_s$ its weight by $w^{(s)}_j$ and its processing requirement by $p^{(s)}_j$, this is equal to

    \begin{equation*}
      \frac{1}{z} \sum_{s=1}^z  \sum_{j=1}^n W_j(J_s, \pred) = \frac{1}{z} \sum_{s=1}^z \sum_{j=1}^n w^{(s)}_{\pred(j)} \sum_{\ell = 1}^j p^{(s)}_{\pred(\ell)} = 
        \sum_{j=1}^n \left( \frac{1}{z} \sum_{s=1}^z w^{(s)}_{\pred(j)} \right) \sum_{\ell = 1}^j \left( \frac{1}{z} \sum_{s=1}^z p^{(s)}_{\pred(\ell)} \right).
    \end{equation*}

    By defining the average weight~$\bar{w}_{\pred(j)} = \frac{1}{z} \sum_{s=1}^z w^{(s)}_{\pred(j)}$ and average processing requirement~$\bar{p}_{\pred(j)} = \frac{1}{z} \sum_{s=1}^z p^{(s)}_{\pred(j)}$ over $\S$ for all~$j \in [n]$, this is equal to minimizing
    \[
      \sum_{j=1}^n \bar{w}_{\pred(j)} \sum_{\ell = 1}^j \bar{p}_{\pred(\ell)}.
    \]

    Consider the {\em average} instance of $\S$, i.e.\ the scheduling instance of $n$ jobs with weights~${\{\bar{w}_j\}}_{j \in [n]}$ and processing requirements~${\{\bar{p}_j\}}_{j \in [n]}$. 
    Since the above expression is equal to the objective value of this instance when scheduling jobs in order~$\pred(1),\ldots,\pred(n)$, we can minimize it by ordering the jobs according to WSPT in polynomial time in~$z$ and~$n$~\cite{smith1956various}.
    \end{proof}

   The space of permutation predictions with predicted machine assignments~${\{\pred_i\}}_{i\in[m]}$ is also finite and we can use similar arguments to prove that they are agnostically PAC-learnable with respect to~$\errR$ with bounded sample complexity.
    This implies that ERM minimizes the empirical error. However, it is not clear how to achieve this with polynomial running time in~$n$, $m$ and the number of samples $z$. 
    Yet one can approximately minimize the empirical error by computing an approximately perfect prediction using the MinIncrease P-WSPT algorithm on the average instance of~$\S$.

\section{Experiments}\label{sec:experiments}

In empirical experiments\footnote{The code is available on GitHub:~\url{https://github.com/mountlex/nonclairvoyant-scheduling/tree/arxiv-v2}} we demonstrate the practicability of our approach in comparison to the previously proposed learning-augmented algorithms by Im~et~al.~\cite{DBLP:conf/spaa/Im0QP21} and Wei and Zhang~\cite{DBLP:conf/nips/WeiZ20}. 
These algorithms consider the {\em single-machine} problem without weights and release dates, \abc{1}{pmtn}{\sum C_j}. %
Notice that in this setting the Preferential Time Sharing~(PTS) algorithm and the Preferential-Round-Robin~(PRR) algorithm of Kumar et al.~\cite{DBLP:conf/nips/PurohitSK18} are equivalent. The only difference is the theoretically different prediction model. However, since all previous algorithms use the length prediction model, we compute permutation predictions based on predicted processing times.
The results for the single machine setting are given in \Cref{sec:exp-single}. 
We further give experimental results for PTS for scheduling weighted jobs  with release dates on parallel identical machines in~\Cref{sec:exp-multiple}. But first we describe the instance generation and experiment setups.

\paragraph{Dataset}
We generate synthetic instances. Each instance is composed of 1000 jobs. We choose this size as a compromise between computational effort and giving the algorithms enough jobs to work properly. The processing requirements for the jobs are individually sampled from a Pareto  distribution with scale~$1$ and shape~$1.1$. This distribution was used in the seminal work on learning-augmented scheduling~\cite{DBLP:conf/nips/PurohitSK18} 
and is (similar to the related Zipf distribution) generally considered to model scheduling applications very well~\cite{DBLP:conf/sigmetrics/BansalH01,DBLP:journals/ton/CrovellaB97,DBLP:journals/tocs/Harchol-BalterD97,DBLP:conf/stacs/ImMP15,DBLP:journals/glottometrics/AdamicH02,DBLP:books/daglib/0025903,DBLP:books/daglib/0034524}. Intuitively, it gives many tiny jobs and few very large jobs. We also performed our experiments with processing requirements sampled from an exponential distribution with mean~1 as well as from a Weibull distribution with scale~2 and shape~$0.5$, which were used in~\cite{DBLP:conf/acda/Mitzenmacher21,DBLP:conf/innovations/Mitzenmacher20}.

\paragraph{Types of experiments}
We perform two types of experiments.
In \emph{sensitivity experiments}, we generate length predictions by adding Gaussian noise to the processing requirements with an increasing standard deviation~$\omega$ for a fixed instance. 
This type of experiment was also performed by Kumar et al.~\cite{DBLP:conf/nips/PurohitSK18} to evaluate PRR as well as in other works on learning-augmented algorithms~\cite{DBLP:conf/icml/LykourisV18,DBLP:conf/icml/AntoniadisCE0S20,AntoniadisCEPS21,DBLP:conf/innovations/LindermayrMS22}.

In \emph{online learning experiments} we first fix a synthetic instance, called base instance. Then, we consider~$10$ subsequent rounds, where in every round~$t$ an instance~$J_t$ arrives, which is generated by adding independently sampled Gaussian noise to the base instance. 
To calculate this noise we use scaled standard deviations parameterized by a factor~$\gamma \geq 0$. That is, we compute noise for true processing requirement~$p$ with a standard deviation equal to~$\gamma \cdot \sqrt{p}$. We feel that this is more realistic for this type of experiment than only using a fixed standard deviation for all jobs, as small jobs may vary less than large jobs over time.
We then compute a prediction for round~$t$ using the ERM algorithm on the set of previous instances~$\{J_0,\ldots,J_{t-1}\}$, as these are in round~$t$ known to the algorithms. As length prediction for round 0 we use an independently sampled random instance.
This type of experiment was also performed in~\cite{DinitzILMV21} to demonstrate the speedup of predictions for the bipartite matching problem.

\subsection{Experiments for a single machine}\label{sec:exp-single}

\paragraph{Algorithms} We present implementation details of the considered algorithms.
As online benchmark (without predictions), we use the best-possible non-clairvoyant algorithm Round-Robin~(RR)~\cite{DBLP:journals/tcs/MotwaniPT94}.

{\em TwoStage} (algorithm by Wei and Zhang~\cite{DBLP:conf/nips/WeiZ20}) executes RR until a certain time point depending on the predicted processing requirements and the confidence parameter~$\lambda \in [0,1]$. Then, it schedules the jobs in non-decreasing order of their predicted processing requirements. If at any time a job finishes before or after their predicted length, it finishes the remaining instance with RR. This algorithm achieves for instances with at most two jobs a consistency-robustness tradeoff that matches a lower bound shown in~\cite{DBLP:conf/nips/WeiZ20}.

\emph{MultiStage} (algorithm by Im et al.~\cite{DBLP:conf/spaa/Im0QP21}), works in %
phases and decides whether to follow the prediction or to execute RR by tracking the quality of the prediction. This is done by processing and computing the error of small random samples, which is then projected to the whole set of remaining jobs. We implemented a basic variant of this algorithm, which is~$\bigO(1)$-robust and~$(1+\epsilon)$-consistent for any~$\epsilon > 0$ with high probability under some assumptions (Corollary~34 in~\cite{DBLP:conf/spaa/Im0QP21}).
Our implementation uses base two for unspecified logarithms. A consequence of this choice is that if~$\epsilon < 0.215$, MultiStage executes solely RR on our instances. Therefore, we performed the experiments with rather large~$\epsilon = 0.25$ and~$\epsilon = 10.0$.
The authors of~\cite{DBLP:conf/spaa/Im0QP21} also give a modification of this algorithm which achieves bounds in expectation. We omitted the implementation of this variant as it requires a further parallel execution of RR which makes the calculation of precise completion times very difficult.

\begin{figure}
  \begin{subfigure}[t]{0.49\textwidth}
  \includegraphics[width=\linewidth]{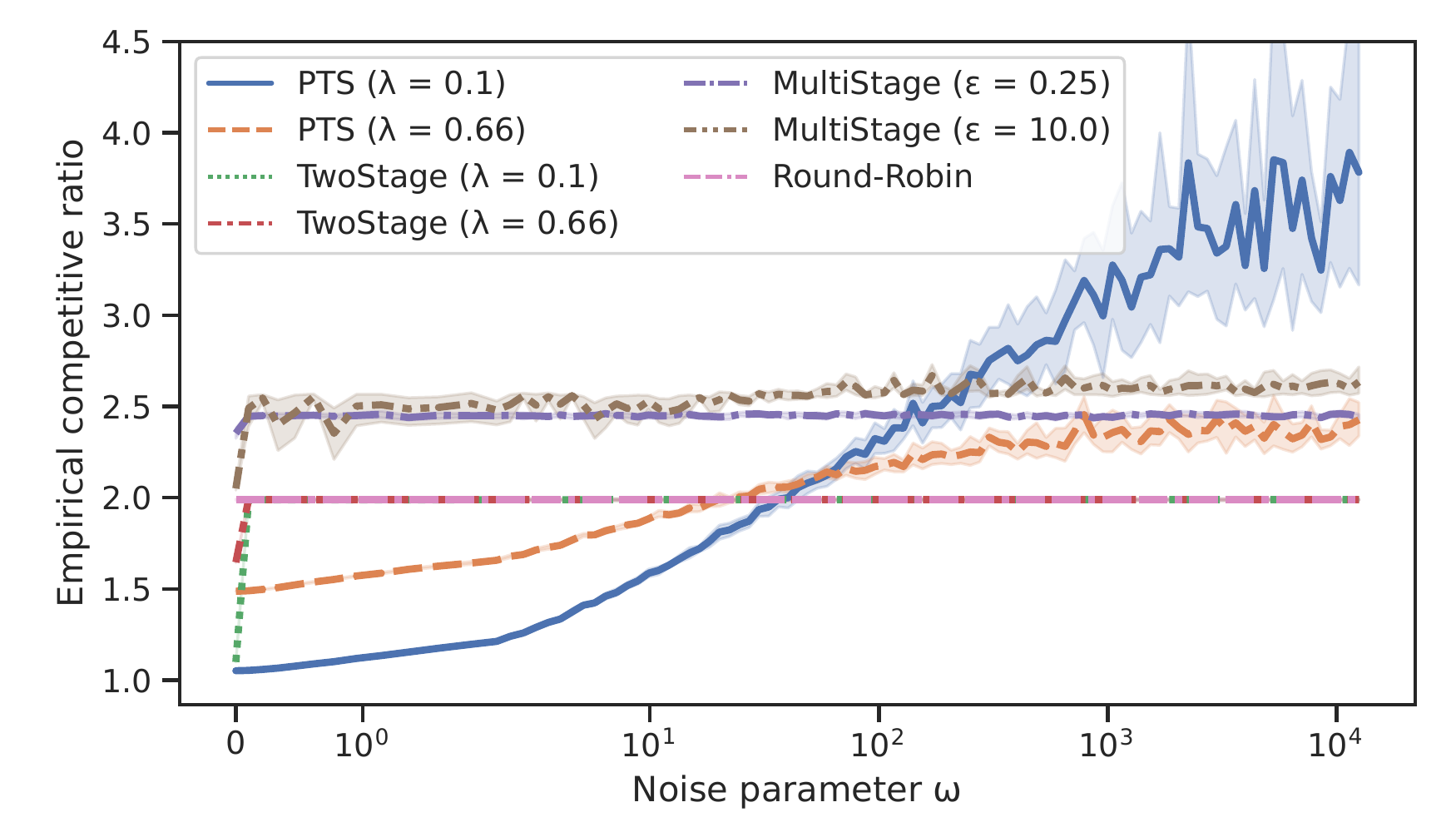}
  \caption{Sensitivity experiment.}\label{fig:exp1}
\end{subfigure}
\hfill
\begin{subfigure}[t]{0.49\textwidth}
  \centering
  \includegraphics[width=\linewidth]{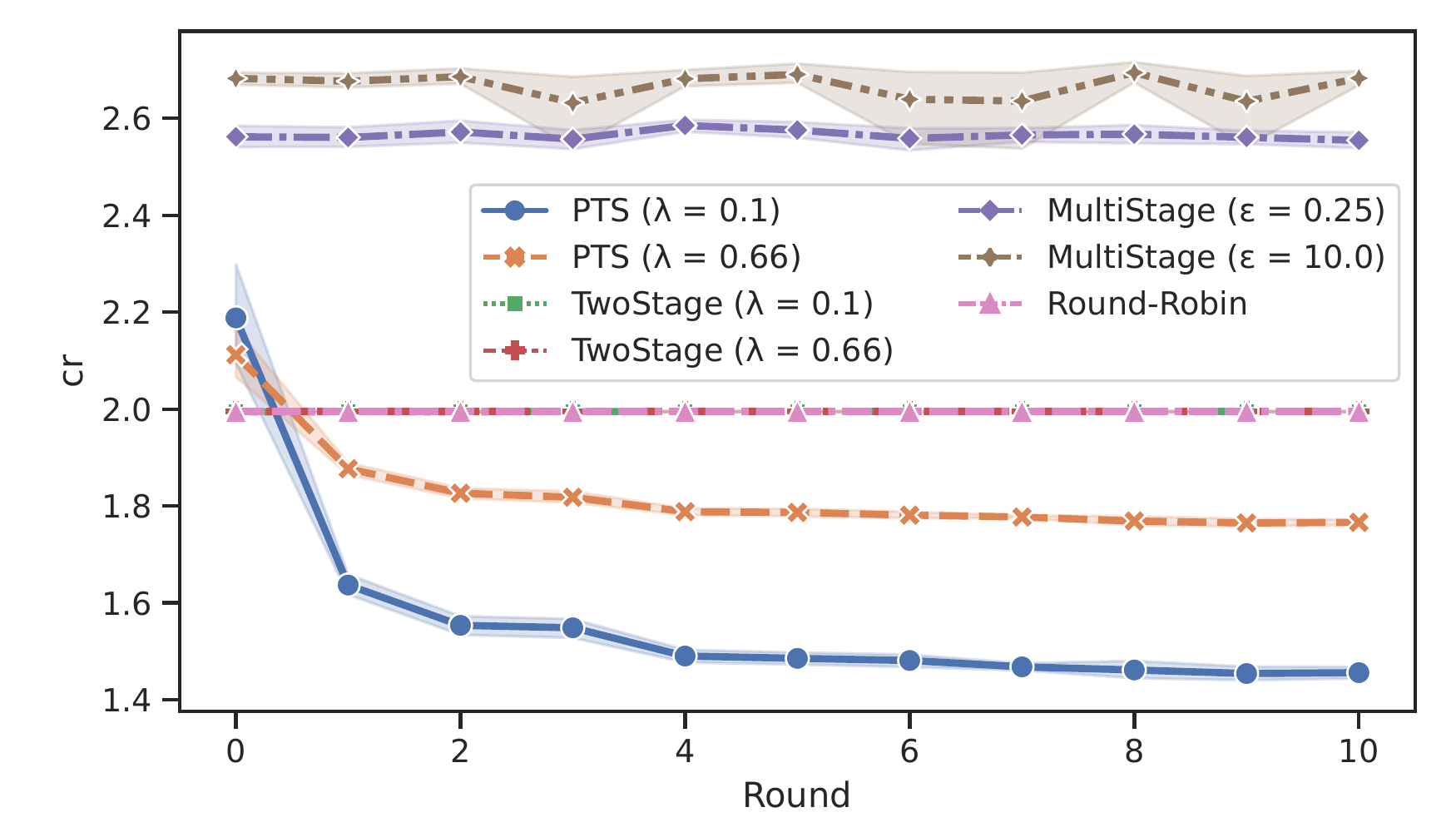}
  \caption{Online learning experiment,~$\gamma = 10$. Note that the %
  plots of TwoStage coincide with the plot of Round-Robin.}\label{fig:exp2}
\end{subfigure}
\caption{Single machine experiments}
\end{figure}

\paragraph{Results}
For every parameter setting we perform~$10$ runs and measure the performance of the algorithms for this setting in terms of \emph{empirical competitive ratio}. That is the average objective value of an algorithm over all runs divided by the optimal objective value for the instance.
We further report error bars that denote the~95\% confidence interval of the runs.

In the following we discuss results for Pareto-distributed processing requirements. For the other considered distributions, we observed very similar results, where in the online learning experiment we use varied noise parameters due to different job characteristics.

We first discuss results of the sensitivity experiment, which are visualized in~\Cref{fig:exp1}. For the consistency case~($\omega = 0$) the algorithms achieve their best performance, as expected. However, even for very small noise ($\omega = 0.1$), we observe that TwoStage and MultiStage experience drastic performance losses compared to having access to precise predictions. This behavior is explainable by the design of the algorithms, 
which switch their execution to the robust fallback procedure RR when detecting incorrect predictions. While TwoStage stays in this mode until the instance completes, MultiStage still estimates medians and errors, incurring an additional overhead.
While the performance of PTS smoothly degrades for larger noise depending on~$\lambda$, it still outperforms RR until~$\omega \approx 20$. For very large noise, the performance of TwoStage and MultiStage stays unchanged, while PTS with~$\lambda=0.1$ still grows.
For larger values of~$\lambda$, e.g.~$\lambda=0.66$ as in the figure, PTS shows a constantly superior performance than MultiStage and, w.r.t.\ TwoStage, a smoother performance with substantially better consistency and only slightly larger robustness. 

In the online learning experiment (\Cref{fig:exp2}), TwoStage and MultiStage do not improve their performance over RR by using predictions. We suspect that this is again due to the fact that the prediction is still too erroneous over the first ten rounds to activate their trustful subroutines. We performed these experiments also with 100 rounds, but did not observe a significant difference. While in round 0 without any prediction PTS performs slightly worse than the other algorithms, it improves over RR already after seeing one sample. This shows that in our setup one sample is enough to approximately distinguish small jobs from large jobs, and this classification is enough to prevent large jobs from delaying the completion of many small jobs. This also demonstrates that permutation predictions capture the relevant information of practical~instances.

\subsection{Experiments for multiple machines}\label{sec:exp-multiple}

We generate 10 synthetic instances with~1000 jobs each. Processing requirements are again sampled from a Pareto-distribution with shape~1.1 and scale~1, weights and release dates from a Pareto-distribution with shape~2 and scale~1. We implement PTS according to~\Cref{thm:identical-machine} and compare it to the non-clairvoyant WDEQ algorithm~\cite{DBLP:conf/ipps/BeaumontBEM12}. To compute empirical competitive ratios and error bars, we use the objective value of the clairvoyant, 2-competitive P-WSPT algorithm~\cite{DBLP:journals/orl/MegowS04} as baseline.
The results of the sensitivity experiment for five machines (\Cref{fig:exp3}) show that for small noise PTS outperforms WDEQ. For growing noise the performance of PTS slowly degrades, but still improves upon WDEQ until $\omega \approx 35$. For large values of $\omega$, the empirical competitive ratio of PTS with~$\lambda = 0.1$ continues growing, while for~$\lambda = 0.5$ and~$\lambda=0.8$ the ratios quickly converge to their robustness bounds.

\begin{figure}
  \centering
  \includegraphics[width=.6\linewidth]{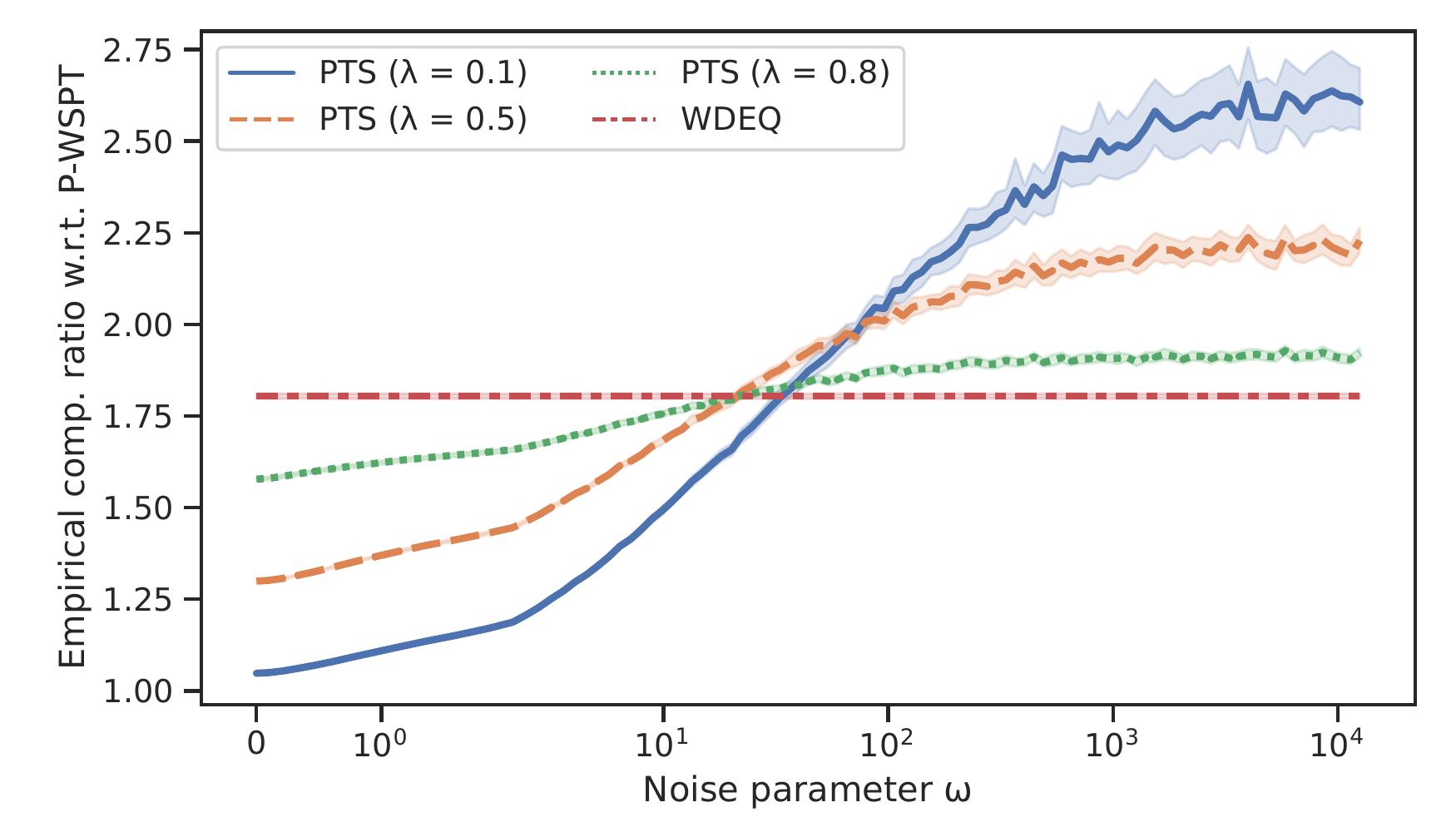}
  \caption{Sensitivity experiment for five identical parallel machines.}\label{fig:exp3}
\end{figure}

\section{Conclusion}

In this paper we proposed a new %
compact prediction model and error measure which fulfill desired properties in theory and practice. %
We revisited a learning-augmented time sharing framework, generalized it, and derived the first results for more complex scheduling problems with weights, release dates and multiple machines.

It would be interesting whether better guarantees are possible by exploiting the fact that processing at a slower rate makes jobs ``earlier'' available, or by exploiting communication between combined algorithms, or by more adaptive algorithms.

\bibliography{literature}

\end{document}